\newcommand{\dsp}{\displaystyle}
\newcommand{\pl}{\partial}
\newcommand{\mc}{\mathcal}
\newcommand{\mf}{\mathbf}
\newcommand{\courier}{\fontfamily{pcr}\selectfont}
\newcommand{\D}{\ensuremath{\mathcal{D}}}
\newcommand{\Id}{\ensuremath{\mathcal{I}}}
\newcommand{\M}{\ensuremath{\mathcal{M}}}
\begin{document}
\doi{10.1080/0003681YYxxxxxxxx}
 \issn{1563-504X}
\issnp{0003-6811}
\jvol{00} \jnum{00} \jyear{2009} \jmonth{January}

\markboth{D. Poole and W. Hereman}{Applicable Analysis}

\articletype{}

\title{The homotopy operator method for symbolic integration by parts
and inversion of divergences with applications}

\author{Douglas Poole and Willy Hereman$^{\ast}$\thanks
{$^\ast$Corresponding author. Email: whereman@mines.edu}
       \\ \vspace{6pt}  
       \em{Department of Mathematical and Computer Sciences, 
           Colorado School of Mines, Golden, CO 80401-1887, USA}}

\maketitle

\begin{abstract}
Using standard calculus, explicit formulas for one-, two- and 
three-dimensional homotopy operators are presented.
A derivation of the one-dimensional homotopy operator is given.
A similar methodology can be used to derive the multi-dimensional versions.
The calculus-based formulas for the homotopy operators are easy to 
implement in computer algebra systems such as {\it Mathematica}, 
{\it Maple}, and {\rm REDUCE}.
Several examples illustrate the use, scope, and limitations of the homotopy 
operators.
The homotopy operator can be applied to the symbolic computation
of conservation laws of nonlinear partial differential equations (PDEs).
Conservation laws provide insight into the physical and mathematical properties
of the PDE.
For instance, the existence of infinitely many conservation laws establishes 
the complete integrability of a nonlinear PDE.
\begin{keywords}
homotopy operator, exact differential function, Euler operator, 
total divergence, conservation law, complete integrability.
\end{keywords}
\begin{classcode}
Primary: 37J35, 37K40, 35Q51; 
Secondary: 68W30, 47J35, 70H06.
\end{classcode}
\end{abstract}
\vspace{-5mm}
\noindent
\section{Introduction}
In many applications one needs to integrate exact expressions involving 
several unspecified functions of multiple independent variables.
The ``homotopy operator" presented in this paper performs that task.
Indeed, we give a formula based on a homotopy that integrates one-dimensional
(1\,D) expressions, or inverts a total divergence in two dimensions (2\,D) 
or three-dimensions (3\,D).
The formulas are derived using a principle of the calculus of variations
and are written in the language of standard calculus.

The concept of homotopy, contributed to Poincar\'e, is certainly familiar to 
differential geometers and algebraic topologists \cite{Cohen73,Tu2008}.
Briefly, a homotopy is a continuous map, 
$T$: $X \times [0,1] \longrightarrow Y$ between two functions 
$\mf{u}$ and $\mf{u_0}$ (both are maps from a space $X$ to a space $Y),$ 
such that $T(\mf{x},0) = \mf{u_0}(\mf{x})$ and $T(\mf{x},1) = \mf{u}(\mf{x}).$
For example, 
$T(\mf{x},\lambda) = \lambda \mf{u}(\mf{x}) + (1-\lambda) \mf{u_0}(\mf{x})
= \mf{u_0}(\mf{x}) + \lambda ( \mf{u}(\mf{x}) - \mf{u_0}(\mf{x}) )$
is a homotopy.
If $\mf{u_0} = \mf{0}$ then $T(\mf{x},\lambda) = \lambda \mf{u}(\mf{x})$ 
expresses a scaling of $\mf{u}$ with a parameter $\lambda.$
The homotopy operator used in this paper is more elaborate for it integrates 
an expression with respect to an auxiliary variable $\lambda$ from $0$ to $1$.

Homotopy operators presumably first appeared in the inverse problem of 
the calculus of variations in works by Volterra \cite{Volterra13}.
They also appear in the proof of the converse of the Poincar\'{e} lemma, 
which states that closed differential $k$-forms are exact \cite{Flanders63},
but only on domains with certain topological assumptions such as a 
star-shaped domain \cite{Olver93}.
Once the domain is appropriately restricted, the proof of the converse of
Poincar\'{e}'s lemma requires the construction of a homotopy operator.
More generally, the construction of suitable homotopy operators is the key 
to exactness proofs for many complexes such as the de Rham complex and the 
bi-variational complex \cite{Olver93}.

One method to investigate the complete integrability 
\cite{AblowitzClarkson91} of a system of partial differential equations (PDEs)
is to determine whether the system has infinitely many conservation laws.
While developing a method for computing conservation laws of nonlinear 
PDEs in $(1+1)$ dimensions \cite{AdamsMS03,HeremanAll09}, the authors 
needed a technique to symbolically integrate expressions involving 
unspecified functions. 
{\it Mathematica}'s {\courier Integrate} function often failed to integrate 
such integrands, in particular, when transcendental functions were present. 
The computation of conservation laws of nonlinear PDEs in multiple space 
variables, requires a tool to invert total divergences 
\cite{HeremanAll05,HeremanIJQC06,Poole09}. 
The homotopy operator used in the proof of the exactness of the 
(bi-)variational complex \cite{Olver93} can do the required integrations.
The homotopy operator in \cite{Olver93} has since been translated into the
language of standard calculus and written into a more efficient algorithmic
form \cite{HeremanAll09,HeremanAll05,HeremanIJQC06,Poole09}.
Although the homotopy operator for one variable does not appear in work 
by Kruskal {\it et al}.\ \cite{KruskalAll70}, it takes only one extra step 
to derive it. 
Taking that step, it also became clear how to generalize the formula to cover 
multiple variables.

The calculus-based form given in \cite{HeremanAll09,HeremanAll05} makes 
the homotopy operator easy to apply by researchers not familiar with 
differential forms.
However, two issues must be addressed before the homotopy operator becomes 
a ready-to-use tool.
The first issue relates to the inversion of a total divergence,
$\mathrm{Div}^{-1}$, which does not have a unique answer.
In analogy with standard integration with $\D_x^{-1}$, where the result is 
up to an arbitrary constant, $\mathrm{Div}^{-1}$ is defined up to curl terms.
The homotopy operator will produce a particular choice for the curl terms
\cite{HeremanAll05,HeremanIJQC06}, often creating very large vectors.
Hence, we discuss an algorithm to remove the unwanted curl terms.
Secondly, the homotopy operator fails to work on expressions involving terms 
of degree zero.
A solution to this problem is also presented in this paper, thereby extending
the applicability of the homotopy operator.

With applications in mind far beyond the computation of conservation laws,
the homotopy operator has its own {\it Mathematica} code,
{\tt HomotopyIntegrator.m} \cite{PooleHereman09software1}.
To our knowledge, {\tt HomotopyIntegrator.m} is currently the only
full-fledged implementation of the homotopy operator in {\it Mathematica}.
In collaboration with Hereman \cite{HeremanAll05}, Deconinck and Nivala 
\cite{DeconinckNivala05,DeconinckNivala08} have recently developed similar 
code in {\it Maple}. 
Although applicable to exact as well as non-exact expressions, their code is 
restricted to the 1\,D case.
Starting from version 9 of {\it Maple}, the homotopy operator is now part of
the kernel of {\it Maple} to broaden the scope of its {\courier integrate}
function.
Anderson \cite{Anderson04} and Cheviakov \cite{Cheviakov07,Cheviakov09} offer 
implementations of the homotopy operator in {\it Maple}, not as stand-alone 
tools but as a component of broader software packages.

The multi-dimensional homotopy operator is an essential tool for the 
symbolic computation of fluxes of conservation laws of nonlinear PDEs 
involving multiple space variables. 
Most of such systems are not completely integrable for they only have a 
finite number of conservation laws \cite{Poole09}.
As an application, we compute a conservation law of the $(2+1)-$dimensional 
Zakharov-Kuznetsov equation \cite{ZakharovKuznetsov74}, 
which describes ion-acoustic solitons in magnetized plasmas.
Recently, we developed the {\it Mathematica} package 
{\tt ConservationLawsMD.m} \cite{PooleHereman09software2} that 
automates the computation of polynomial conservation laws of nonlinear 
polynomial PDEs involving multiple space variables and time.
The multi-dimensional homotopy operator is a key tool in that package.
\section{Preliminary Definitions}
Operations are carried out on differential functions,
$f(\mf{x},\mf{u}^{(M)}(\mf{x})),$ where $\mf{u}^{(M)}(\mf{x})$ 
denotes the dependent variable $\mf{u} = (u^1, \dots, u^j, \dots, u^N)$ 
and its partial derivatives (up to order $M)$ with respect to independent 
variable $\mf{x}.$
Although we allow variable coefficients, the differential functions should 
not contain terms that are functions of $\mf{x}$ only.
For simplicity of notation in the examples, we will denote 
$u^1, u^2, u^3,$ etc.\ by $u,v,w,$ etc.\ 

We consider only 1\,D, 2\,D, and 3\,D cases.
Therefore, the independent space variable, $\mf{x},$ will have at most three 
components.
Thus, $\mf{x}$ represents $x$, $(x,y),$ and $(x,y,z)$ in 1\,D, 2\,D, and 3\,D 
problems, respectively.
Partial derivatives are denoted by
$u_{k_1 x \, k_2 y \, k_3 z} = \frac{\pl^{k_1 + k_2 + k_3} u}
{\pl x^{k_1}\pl y^{k_2} \pl z^{k_3}}$.
For example, $\frac{\pl^5 u}{\pl x^3 \pl y^2}$ is written as $u_{3x\,2y}.$

For simplicity, we often abbreviate $f(\mf{x},\mf{u}^{(M)}(\mf{x}))$ by $f.$
The notation $f[\lambda \mf{u}]$ means that in $f$ one replaces 
$\mf{u}$ by $\lambda\,\mf{u}$, $\mf{u}_x$ by $\lambda\,\mf{u}_x$, 
and so on for all derivatives of $\mf{u},$ 
where $\lambda$ is an auxiliary parameter.
For example, if $f = x u^2 v_x + u_x v_{2x} \sin w$ then 
$f[\lambda \mf{u}] = 
\lambda^3 x u^2 v_x + \lambda^2 u_x v_{2x} \sin \lambda w.$

Two additional operators are needed before the homotopy operator can 
be defined.
The total derivative operator allows one to algorithmically compute
derivatives of differential functions based on the chain rule of 
differentiation.
\begin{definition}
With $\mf{x} = x$, the 1\,D total derivative operator $\D_x$ acting on 
$f = f(x,\mf{u}^{(M)}(x))$ is defined as 
%
\[
\D_x f = \frac{\pl f}{\pl x} + \sum_{j=1}^N \sum_{k=0}^{M_1^j}
u_{(k+1) x}^j \frac{\pl f}{\pl u_{k x}^j},
\]
where $M_1^j$ is the order of $f$ in component $u^j$ and 
$M = {\rm max}_{j = 1, \dots, N} M_1^j.$
The partial derivative, $\frac{\pl }{\pl x},$ acts on any $x$ that appears
explicitly in $f$, but not on $u^j$ or any partial derivatives of $u^j.$
The total derivative operators in 2\,D and 3\,D are defined analogously.
For example, the 3\,D total derivative operator with respect to $x$ is 
\[
\D_x f = \frac{\pl f}{\pl x}
+ \sum_{j=1}^N \sum_{k_1=0}^{M_1^j} \sum_{k_2=0}^{M_2^j}
\sum_{k_3=0}^{M_3^j} u_{(k_1+1) x \, k_2 y \, k_3 z}^j
\frac{\pl f}{\pl u_{k_1 x \, k_2 y \, k_3 z}^j},
\]
where $M_1^j$, $M_2^j$ and $M_3^j$ are the orders of $f$ for component $u^j$
with respect to $x$, $y$, and $z$, respectively 
\end{definition}
Obviously, $\D_y$ and $\D_z$ can be defined analogously.
Note that for the 3\,D case, 
$M$ is the maximum order for derivatives on $u^j,\, j = 1, \ldots, N.$

The Euler operator, also known as the variational derivative, is one of 
the most important tools in the calculus of variations.
\begin{definition}
The Euler operator for the 1\,D case where $f = f(x, \mf{u}^{(M)}(x))$ is 
\begin{eqnarray}
\label{zeroeulerscalarux1D}
\mc{L}_{u^j (x)} f
&=& \sum_{k=0}^{M_1^j} (-\D_x)^k \frac{\pl f}{\pl u_{kx}^j}
\nonumber \\ 
&=& \frac{\pl f}{\pl u^j} - \D_x \frac{\pl f}{\pl u_x^j}
+ \D_x^2 \frac{\pl f}{\pl u_{2x}^j}
- \D_x^3 \frac{\pl f}{\pl u_{3x}^j} + \cdots
+ (-\D_x)^{M_1^j} \frac{\pl f}{\pl u_{M_1^j x}^j},
\end{eqnarray}
$j = 1, \dots, N.$
The 2\,D and 3\,D Euler operators (variational derivatives) are defined 
analogously.
For example, in the 2\,D case where $\mf{x} = (x,y)$ and 
$f = f(\mf{x},\mf{u}^{(M)}(\mf{x}))$, the Euler operator is defined as
\begin{equation}
\label{zeroeulerscalarux2D}
\mc{L}_{u^j(x,y)} f
= \sum_{k_1=0}^{M_1^j} \sum_{k_2=0}^{M_2^j} (-\D_x)^{k_1} (-\D_y)^{k_2}
\frac{\pl f}{\pl u_{k_{1}x\,k_{2}y}}, \quad j = 1, \dots, N.
\end{equation}
\end{definition}
In this paper we use the Euler operators to test if differential functions 
are exact.
\begin{definition}
Let $f = f(\mf{x}, \mf{u}^{(M)}(\mf{x}))$ be a differential function
of order $M.$ 
When $\mf{x} = x$, $f$ is called {\it exact} if there exists a differential 
function $F(x, \mf{u}^{(M-1)}(x))$ such that $f = D_x F$.  
When $\mf{x} = (x, y)$ or $\mf{x} = (x,y,z)$,
$f$ is {\it exact} if there exists a differential vector function
$\mf{F}(\mathbf{x},\mf{u}^{(M-1)}(\mf{x}))$ such that 
$f = \mathrm{Div}\,\mf{F}.$
In the 2\,D case, $\mf{F} = (F^1,F^2)$ and 
$\mathrm{Div}\,\mf{F} = \D_x F^1 + \D_y F^2.$
The definition of $\mathrm{Div}$ in 3\,D is analogous.
\end{definition}
\begin{theorem}
\label{zeroeulerexact}
A differential function $f = f(\mf{x},\mf{u}^{(M)}(\mf{x}))$ is exact 
if and only if 
$\mc{L}_{\mf{u}(\mf{x})} f \equiv \mf{0}.$
Here, $\mf{0}$ is the vector $(0,0,\cdots,0)$ which has $N$ components
matching the number of components of $\mf{u}.$
\end{theorem}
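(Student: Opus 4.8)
The plan is to prove both implications of the biconditional separately, treating the 1\,D case first and then indicating how the argument extends to 2\,D and 3\,D. For the ``only if'' direction, suppose $f$ is exact, say $f = \D_x F$ in 1\,D. I would show that $\mc{L}_{u^j(x)}(\D_x F) \equiv 0$ for every component $u^j$ by a direct computation: expand $\D_x F$ using the total derivative formula from Definition~1, apply $\frac{\pl}{\pl u^j_{kx}}$, and observe that $\frac{\pl}{\pl u^j_{kx}}(\D_x F) = \D_x\frac{\pl F}{\pl u^j_{kx}} + \frac{\pl F}{\pl u^j_{(k-1)x}}$ (with the convention that the second term is absent when $k=0$). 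Substituting this into the alternating sum $\sum_k (-\D_x)^k \frac{\pl}{\pl u^j_{kx}}(\D_x F)$ produces a telescoping cancellation: the term $(-\D_x)^k \D_x \frac{\pl F}{\pl u^j_{kx}}$ cancels against $-(-\D_x)^{k+1}\frac{\pl F}{\pl u^j_{kx}}$ coming from the next index, leaving zero. The 2\,D and 3\,D cases work the same way, with the analogous identity $\frac{\pl}{\pl u^j_{k_1x\,k_2y}}(\D_x F^1 + \D_y F^2)$ producing two independent telescoping sums, one in $\D_x$ and one in $\D_y$, each collapsing to zero.

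For the ``if'' direction I would argue that $\mc{L}_{\mf{u}(\mf{x})} f \equiv \mf{0}$ forces $f$ to be exact, and here the homotopy operator supplies the explicit antiderivative. The natural strategy is to invoke the homotopy-based integration formula developed in the body of the paper: define $F = \int_0^1 (\mc{H} f)[\lambda\mf{u}]\,\frac{d\lambda}{\lambda}$ (or the appropriate vector analogue $\mf{F}$ in higher dimensions), and verify the identity $f = \D_x F + (\text{terms involving } \mc{L}_{\mf{u}(\mf{x})} f)$ — the ``homotopy formula'' that expresses $f$ as a total derivative plus a correction governed by the Euler operator. Since the hypothesis kills the Euler-operator contribution, this reduces exactly to $f = \D_x F$, establishing exactness. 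In the multi-dimensional case the same formula yields $f = \mathrm{Div}\,\mf{F}$ once the variational term vanishes.

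The main obstacle is the ``if'' direction, and specifically the homotopy identity it relies on. One must be careful about the degree-zero issue flagged in the introduction: the factor $\frac{1}{\lambda}$ in the homotopy integral is only integrable when $f[\lambda\mf{u}]$ vanishes to positive order in $\lambda$, i.e., when $f$ has no constant (degree-zero) component. Under the standing assumption that differential functions contain no terms that are functions of $\mf{x}$ only, and after accounting for the workaround for degree-zero terms described later in the paper, this is not a genuine gap, but the proof should either cite that machinery or restrict attention to the homogeneous components where the integral converges and then sum. A secondary subtlety is purely bookkeeping: making the telescoping cancellation in the ``only if'' direction fully rigorous requires tracking the upper summation limit $M_1^j$ carefully, since $\D_x F$ has order one higher than $F$ in $u^j$, so the boundary term at $k = M_1^j$ must be checked to vanish — which it does because $\frac{\pl F}{\pl u^j_{M_1^j x}}$ does not depend on $u^j_{(M_1^j+1)x}$ and the corresponding contribution is exactly the one with no partner in the sum, yet it is itself zero since $F$ has order $M-1$.
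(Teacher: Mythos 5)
Your proposal is essentially correct, and it is worth noting that the paper itself gives no in-text proof of this theorem (it defers to the reference [11]); what you propose is the standard argument, and its second half is in fact the very computation the paper carries out later. For the ``only if'' direction, your telescoping identity $\frac{\pl}{\pl u^j_{kx}}(\D_x F) = \D_x\frac{\pl F}{\pl u^j_{kx}} + \frac{\pl F}{\pl u^j_{(k-1)x}}$ is the correct commutation relation, the alternating sum does collapse to zero, and your handling of the top index is right: the unpaired contribution at $k = M_1^j$ vanishes because $F$ has order one lower than $\D_x F$. For the ``if'' direction, the ``homotopy formula'' you gesture at is precisely equation (\ref{mf}) of Section \ref{sec1Dhomotopy}: the integration-by-parts identity $\sum_j u^j \mc{L}_{u^j(x)} f = \M f - \D_x\bigl(\sum_j \mc{I}_{u^j(x)} f\bigr)$ holds for arbitrary $f$, so the hypothesis $\mc{L}_{\mf{u}(x)} f \equiv \mf{0}$ gives $\M f = \D_x(\cdots)$, and applying $\M^{-1}$ together with Lemma \ref{MDcommute1D} exhibits $f$ as a total derivative. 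There is no circularity in reusing that identity here, since the paper derives it by pure integration by parts before invoking Theorem \ref{zeroeulerexact}. The one point you should not wave away quite so quickly is the $\mathrm{Ker}\,\M$ issue: the step $f = \M^{-1}\M f$ genuinely fails when some term of $f$ has degree zero, and for the theorem as stated (all differential functions) this case must be closed explicitly, either by the coordinate shift of Section \ref{seckerM} applied before running the argument, or by the classical variant that integrates $\frac{d}{d\lambda}F[\lambda\mf{u}]$ along the homotopy directly. With that caveat addressed, the proof is complete; the 2\,D and 3\,D extensions proceed as you describe, with the two (or three) telescoping sums running independently in $\D_x$, $\D_y$ (and $\D_z$).
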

\noindent 
A proof for Theorem \ref{zeroeulerexact} is given in \cite{Poole09}.
\vspace{-3mm}
\noindent
\section{Integration by Parts Using the One-Dimensional Homotopy Operator}
In this section we show how to compute $F = \D_x^{-1} f = \int f \, dx$ 
when $f$ is exact.
Direct integration by parts is cumbersome and prone to errors if $f$ contains 
high-order derivatives of several dependent variables.
The 1\,D homotopy operator provides an alternate method for integrating exact 
differential functions (with one independent variable) and multiple dependent 
variables, often eliminating the need for integration by parts.
\begin{definition}
\label{oneDhomotopyoperator}
Let $\mf{x} \!=\! x$ be the independent variable and 
$f = f(x, \mf{u}^{(M)}(x))$ be an exact differential function, 
i.e.\ there exist a function $F$ such that $F = \D_x^{-1} f.$ 
Thus, $F$ is the integral of $f.$
The 1\,D homotopy operator is defined as
\begin{equation}
\label{1Dhomotopy}
\mc{H}_{\mf{u}(x)} f =  
\int_{\lambda_0}^1 
\left( 
\sum_{j=1}^{N} 
\mc{I}_{u^j(x)} f \right)[\lambda {\mf{u}}]\,\frac{d \lambda}{\lambda},
\end{equation}
where $\mf{u} = (u^1, \dots, u^j, \dots, u^N).$
The integrand, $\mc{I}_{u^j(x)} f,$ is defined as
\begin{equation}
\label{1Dintegrand}
\mc{I}_{u^j(x)} f 
= \sum_{k=1}^{M_1^j} \left( \sum_{i=0}^{k-1}{u_{ix}^j}
\left( -\D_x \right)^{k-(i+1)} \right) \frac{\pl f}{\pl {u_{kx}^j}},
\end{equation}
where $M_1^j$ is the order of $f$ in dependent variable $u^j$ with respect 
to $x$.
\end{definition}
The usual homotopy operator with $\lambda_0 = 0$ applies when 
(\ref{1Dhomotopy}) converges.
However, due to a possible singularity at $\lambda = 0,$ 
(\ref{1Dhomotopy}) might diverge for $\lambda_0 \to 0.$ 
This can occur with rational as well as irrational integrands.
In such cases, one can take $\lambda_0 \to \infty$ or, alternatively, 
evaluate the indefinite integral and let $\lambda \to 1.$
As shown in \cite{DeconinckNivala08}, the homotopy operator with appropriately
selected limits remains a valid tool for integrating non-polynomial functions.

In the next section we prove that $F = \mc{H}_{\mf{u}(x)} f.$
For now, we illustrate the application of the homotopy operator formulas 
(\ref{1Dhomotopy}) and (\ref{1Dintegrand}) with an example. 
\begin{example}
Let $(u^1,u^2) = (u,v)$ and take $f(x,\mf{u}^{(3)}(x)) 
= u^2 + 2 x u u_x + u_x v_{3x} + u_{2x} v_{2x} -3 v_x^2 v_{2x}.$
Applying the Euler operator (\ref{zeroeulerscalarux1D}) separately for 
$u$ and $v,$ with $M_1^1 = 2$ and $M_1^2 = 3,$ one readily verifies that
$\mc{L}_{u(x)} f \equiv 0$ and ${L}_{v(x)} f \equiv 0.$
Thus, $f$ is exact.
To find $F = \int f\,dx$, using (\ref{1Dintegrand}), first compute the 
integrands, 
\begin{eqnarray*}
\mc{I}_{u(x)} f  
&=& ( u \Id )(\frac{\pl f}{\pl u_x}) 
+ (u_x \Id - u \D_x) (\frac{\pl f}{\pl u_{2x}}) 
= u (2 x u + v_{3x}) + (u_x \Id - u \D_x) (v_{2x}) 
\nonumber \\  
&=& 2 x u^2 + u_x v_{2x}, 
\nonumber \\ 
\mc{I}_{v(x)} f  
&=& ( v \Id )(\frac{\pl f}{\pl v_x}) 
+ (v_x \Id - v \D_x) (\frac{\pl f}{\pl v_{2x}}) 
+ (v_{2x} \Id - v_x \D_x + v \D_x^2) (\frac{\pl f}{\pl v_{3x}}) 
\nonumber \\ 
&=& -v (6 v_x v_{2x}) + (v_x \Id - v \D_x) (u_{2x} - 3 v_x^2)
+ (v_{2x} \Id - v_x \D_x + v \D_x^2) (u_x) 
\nonumber \\ 
&=& u_x v_{2x} - 3 v_x^3,
\end{eqnarray*}
where $\Id$ is the identity operator.
Next, sum the integrands, replace $u, u_x, v_x$ and $v_{2x}$ with 
$\lambda u, \lambda u_x, \lambda v_x$, and $\lambda v_{2x},$ respectively, 
divide by $\lambda,$ and integrate with respect to $\lambda.$ 
In detail, 
\begin{eqnarray*}
F = \mc{H}_{\mf{u}(x)} f &=& \int_0^1 
\left( \mc{I}_{u(x)} f + \mc{I}_{v(x)} f \right)
[\lambda \mf{u}]\,\frac{d \lambda}{\lambda} 
\nonumber \\ 
&=& \int_0^1 ( 2 \lambda x u^2 + 2 \lambda u_x v_{2x} 
    - 3 \lambda^2 v_x^3 )\,d \lambda 
\nonumber \\ 
&=& \left. ( \lambda^2 x u^2 + \lambda^2 u_x v_{2x} 
    - \lambda^3 v_x^3 )\right|_{0}^{1} 
\nonumber \\ 
&=& x u^2 + u_x v_{2x} - v_x^3.
\end{eqnarray*}
Clearly, $\D_x F = f.$
Here $\lambda_0 = 0$ was used since (\ref{1Dhomotopy}) converged for 
$\lambda_0 \to 0.$
\end{example}
For polynomial differential expressions, (\ref{1Dhomotopy}) replaces 
integration by parts (in $x)$ with a few differentiations following by a 
standard integration (of a polynomial) with respect to an auxiliary parameter
$\lambda.$
The second example illustrates the integration of a rational differential 
function where (\ref{1Dhomotopy}) diverges for $\lambda_0 \to 0.$
\begin{example}
Let $(u^1,u^2) = (u,v)$ and take
$f(x,\mf{u}^{(1)}(x)) = (v u_x + u v_x)/(uv)^2.$
Using (\ref{1Dintegrand}), 
$\mc{I}_{u(x)} f = \frac{1}{u v}$ and $\mc{I}_{v(x)} f = \frac{1}{u v}.$
Evaluation of (\ref{1Dhomotopy}) gives
\begin{eqnarray*}
F = \mc{H}_{\mf{u}(x)} f 
&=& \int_{\lambda_0}^1 \frac{2}{\lambda^3 u v}\,d \lambda 
= - \left. \frac{1}{\lambda^2 u v}\right|_{\lambda_0}^{1}.
\end{eqnarray*}
Obviously, $\lambda_0 \to 0$ would cause the integral to diverge.
Instead, take $\lambda_0 \to \infty$ so that 
$F = \mc{H}_{\mf{u}(x)} f = - \frac{1}{u v}.$
Alternatively, when a singularity occurs at $\lambda = 0$, one 
could compute the indefinite integral and let $\lambda \to 1.$
In either case, $\D_x F = f.$
\end{example}
\vspace{-3mm}
\noindent
\section{The One-Dimensional Homotopy Operator}
\label{sec1Dhomotopy}
The 1\,D homotopy operator in Definition \ref{oneDhomotopyoperator} is an 
expanded version of the one (in terms of higher-Euler operators) given in 
\cite{HeremanAll05,DeconinckNivala05,DeconinckNivala08,Cheviakov09}. 
Although they give the same result \cite{Poole09,HeremanDeconinckPoole06}, 
the expanded homotopy operator formula is computationally more efficient 
for two reasons: the number of times that the total derivative is applied 
is significantly smaller and the combinatorial coefficients have been 
eliminated.

To prove that the homotopy operator (\ref{1Dhomotopy}) does indeed integrate 
an exact differential expression, some extra definitions and theorems 
are needed.
We first define a {\it degree operator}, $\M.$
Application of $\M$ to a differential function yields that function multiplied 
by its degree; hence its name.
\begin{definition}
\label{moperatordefn}
Let $\mf{x} = x$ be the independent variable.
The degree operator $\M$ acting on a differential function
$f = f(x, \mf{u}^{(M)}(x))$ is defined \cite{KruskalAll70} as
\begin{equation}
\label{moperator}
\M f = \sum_{j=1}^N \sum_{i=0}^{M_1^j} u_{ix}^j \frac{\pl f}{\pl u_{ix}^j},
\end{equation}
where $f$ has order $M_1^j$ in $u^j$ with respect to $x$.
\end{definition}
The next example shows how the degree operator works. 
\begin{example}
\label{moperatorexpl}
Let $\mf{u} = (u^1, u^2) = (u,v)$ and 
$f(x, \mf{u}^{(5)}(x)) 
= \left( u \right)^p \left( v_{2x} \right)^q \left( u_{5x} \right)^r$,
where $p$, $q$, and $r$ are nonzero rational numbers.
Using (\ref{moperator}), 
\begin{eqnarray*}
\M f &\!=\!& u \frac{\pl}{\pl u} \left( u \right)^p
\left( v_{2x} \right)^q \left( u_{5x} \right)^r
\!+\! u_{5x} \frac{\pl}{\pl u_{5x}} \left( u \right)^p
\left( v_{2x} \right)^q \left( u_{5x} \right)^r
\!+\! v_{2x} \frac{\pl}{\pl v_{2x}}  \left( u \right)^p
\left( v_{2x} \right)^q \left( u_{5x} \right)^r
\nonumber \\ 
&=& p u \left( u \right)^{p-1}\left( v_{2x} \right)^q \left( u_{5x} \right)^r
+r u_{5x} \left( u \right)^p\left( v_{2x} \right)^q\left( u_{5x} \right)^{r-1}
+q v_{2x} \left( u \right)^p\left( v_{2x} \right)^{q-1}\left( u_{5x} \right)^r
\nonumber \\ 
&=& (p+q+r)\left( u \right)^p \left( v_{2x} \right)^q \left( u_{5x} \right)^r.
\end{eqnarray*}
Note that the total degree of $f$ has become a factor.
Compare this with $x(x^n)^{\prime} = x (n x^{n-1}) = n x^n$ 
$(n$ is a nonzero rational) 
in 1\,D calculus. 
\end{example}
It is trivial to prove that $\M$ is a linear operator. 
Less straightforward is finding the kernel $(\mathrm{Ker})$ of $\M.$
For what follows, we need the notion of homogeneity.
\begin{definition}
A differential function $f = f( \mf{x}, \mf{u}^{(M)}(\mf{x}))$ is called 
{\it homogeneous} of degree $p$ in $\mf{u}$ (and its derivatives) 
if $f[\lambda \mf{u}] = \lambda^p f.$
\end{definition}
Note the analogy with the definition of a homogeneous function of several 
variables. 
For example for the two-variables case, $f(x,y)$ is called homogeneous of 
degree $p$ if $f(\lambda x, \lambda y) = \lambda^p f(x,y).$
Examples are $f(x,y) = a x^2 + b x y + c y^2$ of degree 2 and 
$f(x,y) = 1/\sqrt{ a x^2 + b x y + c y^2}$ of degree $-1$ 
$(a,b,$ and $c$ are constants).
\begin{theorem}
\label{TheoremkernelM}
If $f = \frac{k}{\ell}$ where $k = k(x,\mf{u}^{(M)}(x))$ and 
$\ell = \ell(x,\mf{u}^{(M)}(x))$ are homogeneous differential functions of 
the same degree (so, $f$ is of degree $0)$ then $\M f = 0.$
\end{theorem}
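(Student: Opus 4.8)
The plan is to reduce the statement to the Euler-type identity that defines homogeneity and then exploit the fact that $f$ has degree zero. First I would write down what $\mf{M}$ does to a quotient. Since $\mf{M}$ acts by the sum $\sum_{j}\sum_{i} u_{ix}^j\,\partial/\partial u_{ix}^j$, and differentiation of $k/\ell$ obeys the quotient rule, one gets $\mf{M}(k/\ell) = (\ell\,\mf{M}k - k\,\mf{M}\ell)/\ell^2$. So it suffices to know $\mf{M}k$ and $\mf{M}\ell$ individually.

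Second, I would invoke the homogeneity hypothesis together with Example~\ref{moperatorexpl} as the prototype: the computation there shows that applying $\mf{M}$ to a monomial of total degree $p$ returns $p$ times that monomial, and by linearity of $\mf{M}$ the same holds for any homogeneous differential function of degree $p$. More directly, homogeneity means $k[\lambda\mf{u}] = \lambda^p k$; differentiating both sides with respect to $\lambda$ and setting $\lambda = 1$ yields, by the chain rule, $\sum_{j}\sum_{i} u_{ix}^j\,\partial k/\partial u_{ix}^j = p\,k$, i.e. $\mf{M}k = p\,k$. The same argument gives $\mf{M}\ell = p\,\ell$ since $\ell$ has the same degree $p$. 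Substituting into the quotient formula from the first step gives $\mf{M}f = (\ell\cdot p\,k - k\cdot p\,\ell)/\ell^2 = 0$, which is the claim.

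The only real care needed is the bookkeeping in the first step: one must check that the orders appearing in the definition of $\mf{M}$ applied to $f = k/\ell$ are consistent with the orders appearing when $\mf{M}$ is applied to $k$ and to $\ell$ separately. This is routine — the summation ranges for $u_{ix}^j$ are just the maxima of the corresponding ranges for $k$ and $\ell$, and terms where $\partial k/\partial u_{ix}^j$ or $\partial \ell/\partial u_{ix}^j$ vanishes contribute nothing — so no genuine obstacle arises. I do not expect any difficulty beyond this; the substance of the proof is the one-line differentiate-in-$\lambda$ identity, and the rest is the quotient rule. A brief remark that this is the differential-function analogue of the classical Euler relation for homogeneous functions (already foreshadowed in the text after Example~\ref{moperatorexpl}) would round off the argument.
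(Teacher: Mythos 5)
Your proposal is correct and follows essentially the same route as the paper: the quotient rule for $\M$ combined with the Euler identity $\M k = p\,k$, $\M \ell = p\,\ell$ for homogeneous functions of degree $p$. The only difference is that you justify the Euler identity explicitly (by differentiating $k[\lambda\mathbf{u}]=\lambda^p k$ in $\lambda$ at $\lambda=1$), a step the paper dismisses as obvious.
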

\begin{proof}
Suppose that $k$ and $\ell$ are homogeneous differential functions of 
degree $p.$ 
Obviously, $\M k = p k$ and $\M \ell = p \ell.$
Hence, $\M (\frac{k}{\ell}) 
= \frac{\ell (\M k) - k (\M \ell)}{\ell^2} 
= \frac{ p \ell k - p k \ell}{\ell^2} = 0.$
\end{proof}
Conversely, one could ask
``Which differential functions are in the kernel of the degree operator?" 
For simplicity, consider the case where $f(u, v, u_x, v_x).$ 
Any $f \in {\rm Ker}\,\M$ must satisfy $\M f = 0,$ or explicitly, 
\begin{equation}
\label{Monf}
u \frac{\partial f}{\partial u} 
+ u_x \frac{\partial f}{\partial u_x} 
+ v \frac{\partial f}{\partial v}
+ v_x \frac{\partial f}{\partial v_x} = 0.
\end{equation}
The jet variables $u, v, u_x, $ and $v_x$ are independent.
Hence, (\ref{Monf}) is a linear first-order PDE which can be solved 
with the method of the characteristics\footnote{We are indebted to 
Mark Hickman for this argument and subsequent derivation.}.
Solving the first-order system of ODEs, 
%
\[
\frac{d u}{d \tau}   = u,   \qquad 
\frac{d u_x}{d \tau} = u_x, \qquad 
\frac{d v}{d \tau}   = v,   \qquad 
\frac{d v_x}{d \tau} = v_x, 
\]
%
where $\tau$ parameterizes the characteristic curves, yields
\begin{equation}
\label{solchareqs}
u   = c_1 {\rm e}^{\tau}, \qquad
u_x = c_2 {\rm e}^{\tau}, \qquad
v   = c_3 {\rm e}^{\tau}, \qquad
v_x = c_4 {\rm e}^{\tau}, 
\end{equation}
where the $c_i$ are arbitrary constants.
The first equation in (\ref{solchareqs}) determines 
${\rm e}^{\tau} = \frac{u}{c_1}.$
Using it to eliminate $\tau$ from the remaining equations of 
(\ref{solchareqs}), yields
\[
\label{char}
u_x = \frac{c_2}{c_1} u = C_1\,u, \qquad
v   = \frac{c_3}{c_1} u = C_2\,u, \qquad
v_x = \frac{c_4}{c_1} u = C_3\,u,
\]
where $C_1, C_2,$ and $C_3$ are arbitrary constants.
Thus, the general solution of (\ref{Monf}) is 
$f(C_1, C_2, C_3) = f(\frac{u_x}{u}, \frac{v}{u}, \frac{v_x}{u} ),$ 
where the functional form of $f$ is arbitrary.
\begin{example}
By construction, 
\begin{equation}
\label{examplekernelMtest}
f = \mathrm{D}_x \left( \frac{u+v}{u-v} \right)
  = \frac{2 (u v_x - u_x v)}{(u-v)^2}
\end{equation}
is exact. 
Furthermore, $\M f = 0$ by Theorem \ref{TheoremkernelM}.
Given $f(u,u_x,v,v_x),$ not necessarily rational, one can {\it a priori} 
test whether or not $f \in {\rm Ker}\,\M.$ 
Indeed, applying the replacement rules, 
%
$ u   \rightarrow u, \;
u_x \rightarrow \mu u, \;
v   \rightarrow \mu u, \;
v_x \rightarrow \mu u, 
$
to $f$ should yield an expression which is independent of $u, u_x, v,$
and $v_x.$
Applying the replacement rules to $f$ in (\ref{examplekernelMtest}) gives 
$\frac{2 \mu}{1 - \mu},$ which only depends on $\mu.$
\end{example}
The argument above can be extended to any differential function $f$ 
no matter which jet variables are present.
The above ``kernel test" can be performed on exact as well as non-exact 
differential functions. 

Next, the inverse of the degree operator will be derived, which requires 
the use of a homotopy \cite{HeremanDeconinckPoole06}.
To avoid a potentially diverging integral at $0$, the homotopy is defined 
on $[\lambda_0, 1]$. 
To begin with, consider only differential functions with one term, 
e.g., $u u_x$, $x u_x$, $u_x/u, u_x^2/(\sqrt{u^4 +v_x^4}),$ or take $f$ in 
Example \ref{moperatorexpl}.
\begin{theorem}
\label{inverseMtheorem}
Let $f(x, \mf{u}^{(M)}(x))$ be any single term differential function 
in 1\,D and let $g(x, \mf{u}^{(M)}(x)) = \M f(x, \mf{u}^{(M)}(x)),$
where $\M f(x, \mf{u}^{(M)}(x)) \ne 0.$
Then,
\begin{equation}
\label{inverseMthm}
\M^{-1} g(x, \mf{u}^{(M)}(x))
= \int_{\lambda_0}^1 g[\lambda \mf{u}]\,\frac{d \lambda}{\lambda}.
\end{equation}
\end{theorem}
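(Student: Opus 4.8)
The plan is to reduce the statement to Euler's identity for homogeneous functions. First I would observe that a single--term differential function is homogeneous in $\mf{u}$: writing $f = c(x)\prod_i \bigl(u^{j_i}_{k_i x}\bigr)^{p_i}$ (a product of rational powers of the jet variables times a coefficient depending only on $x$), the substitution rule defining $f[\lambda\mf{u}]$ pulls out a common power of $\lambda$, so $f[\lambda\mf{u}] = \lambda^p f$ with $p = \sum_i p_i$. Differentiating this identity with respect to $\lambda$ and setting $\lambda = 1$ yields exactly $\M f = p f$; this is Euler's theorem, and it is precisely the computation carried out termwise in Example~\ref{moperatorexpl}. Since $\M f = g \neq 0$ by hypothesis, $f$ is not the zero function, hence $p \neq 0$, and therefore $g = p f$.

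Next I would insert this into the right--hand side of (\ref{inverseMthm}). Because $g = pf$ is again homogeneous of degree $p$, we have $g[\lambda\mf{u}] = \lambda^p g$, so
\[
\int_{\lambda_0}^1 g[\lambda\mf{u}]\,\frac{d\lambda}{\lambda}
= g\int_{\lambda_0}^1 \lambda^{p-1}\,d\lambda
= \frac{1 - \lambda_0^{\,p}}{p}\,g .
\]
It remains to choose $\lambda_0$ so that $\lambda_0^{\,p}\to 0$: take $\lambda_0\to 0^{+}$ when $p>0$ and $\lambda_0\to\infty$ when $p<0$. With that choice the integral equals $g/p = f$, and since $\M f = g$ this function is a valid $\M^{-1}g$, which proves (\ref{inverseMthm}).

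The integration in $\lambda$ is routine; the two points that need care are the first step, namely that ``single term'' genuinely forces homogeneity of a well--defined degree (transcendental dependence on jet variables is thereby excluded, and the degree--zero case such as $u_x^2/\sqrt{u^4+v_x^4}$ is exactly the case $\M f = 0$ barred by the hypothesis), and the endpoint behaviour of the integrand $\lambda^{p-1}g$, which has an integrable singularity at $\lambda=0$ or $\lambda=\infty$ according to the sign of $p$. So the main obstacle is simply to justify the limiting process $\lambda_0\to 0$ or $\lambda_0\to\infty$ and to note that the resulting value $g/p$ does not depend on which admissible choice of $\lambda_0$ is made; if one additionally wishes to regard $\M^{-1}$ as well defined, one invokes Theorem~\ref{TheoremkernelM} together with linearity of $\M$ to see that any two preimages of $g$ differ by an element of $\mathrm{Ker}\,\M$.
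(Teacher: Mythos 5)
There is a genuine gap at the very first step: you assume that a ``single term'' differential function has the form $c(x)\prod_i (u^{j_i}_{k_i x})^{p_i}$ and is therefore homogeneous, so that $f[\lambda\mf{u}]=\lambda^p f$ and $g=\M f=pf$. In this paper a single term need not be homogeneous: differential functions are allowed to depend transcendentally on the jet variables, and the paper's own notation example $u_x v_{2x}\sin w$, with $f[\lambda\mf{u}]=\lambda^2 u_x v_{2x}\sin\lambda w$, is a single term that is not homogeneous of any degree. For such a term $\M f$ is not a scalar multiple of $f$ (here $\M f = 2u_xv_{2x}\sin w + w\,u_xv_{2x}\cos w$), so the identity $g=pf$ on which your entire computation rests fails. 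This is not a fringe case: Theorem \ref{inverseMtheorem} is invoked in the proof of Theorem \ref{homotopy1Dtheorem} with $g=\M F$ where $F=\D_x^{-1}f$, and for integrands like (\ref{simplef}) the relevant single terms are things like $v_{2x}\sin u$. Your parenthetical claim that ``transcendental dependence on jet variables is thereby excluded'' is asserted rather than proved, and it is exactly the point at which the argument loses the generality the theorem needs.

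The paper's proof avoids homogeneity altogether. It establishes, by the chain rule, the identity
\[
\frac{d}{d\lambda}\,g[\lambda\mf{u}]
=\frac{1}{\lambda}\,\M\, g[\lambda\mf{u}],
\]
valid for \emph{any} differential function $g$, integrates it over $[\lambda_0,1]$ to get
$g - g[\lambda_0\mf{u}] = \M\!\int_{\lambda_0}^1 g[\lambda\mf{u}]\,\frac{d\lambda}{\lambda}$,
and then chooses $\lambda_0$ (either $\lambda_0=0$ or $\lambda_0\to\infty$, depending on whether $\lambda_0$ appears as a factor in the numerator or denominator of $g[\lambda_0\mf{u}]=\lambda_0\sum_{j,i}u^j_{ix}\,\pl f[\lambda_0\mf{u}]/\pl(\lambda_0 u^j_{ix})$) so that the boundary term vanishes; applying $\M^{-1}$ then gives (\ref{inverseMthm}). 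Your argument is a correct and tidy proof of the special case where the single term is a product of powers of jet variables --- there the degree $p$ is well defined, $g=pf$, and the explicit $\lambda$-integral $\frac{1-\lambda_0^{\,p}}{p}g$ makes the choice of $\lambda_0$ transparent --- but to prove the theorem as stated you would need to replace the Euler-identity step by the differentiation-under-the-scaling identity above, which is the paper's route.
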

\begin{proof}
If $g(x, \mf{u}^{(M)}(x))$ has order $M_1^j$ in $u^j$ with respect to $x$, 
then $g[\lambda \mf{u}]$ also has order $M_1^j$ in $u^j$.
Furthermore, using the chain rule, 
\begin{equation}
\label{dlambdaofg}
\frac{d}{d \lambda} g[\lambda \mf{u}] 
= \sum_{j=1}^N \sum_{i=0}^{M_1^j}
\frac{\pl g[\lambda \mf{u}]}{\pl \lambda u^j_{ix}}
\frac{d \lambda u^j_{ix}}{d \lambda} 
= \frac{1}{\lambda} \sum_{j=1}^N
\sum_{i=0}^{M_1^j} u^j_{ix} \frac{\pl g[\lambda \mf{u}]}{\pl u^j_{ix}}
= \frac{1}{\lambda} {\M} g[\lambda \mf{u}],
\end{equation}
by the definition of $\M.$  
Integrate both sides of (\ref{dlambdaofg}) with respect to $\lambda$ yields
\begin{eqnarray}
\label{laststepMinverse}
\int_{\lambda_0}^1 \frac{d}{d \lambda} g[\lambda \mf{u}]\,d \lambda
&=& \int_{\lambda_0}^1 {\M} \frac{g[\lambda \mf{u}]}{\lambda}\,d \lambda, 
\nonumber \\ 
\left. {g \left[ \lambda \mf{u} \right] }\right|_{\lambda_0}^{1}
&=& {\M} \int_{\lambda_0}^1 g[\lambda \mf{u}]\,\frac{d \lambda}{\lambda}, 
\nonumber \\ 
g(x, \mf{u}^{(M)}(x)) - g[\lambda_0 \mf{u}] &=& {\M}
\int_{\lambda_0}^1 g[\lambda \mf{u}]\,\frac{d \lambda}{\lambda}.
\end{eqnarray}
To get (\ref{inverseMthm}), $g[\lambda_0 \mf{u}]$ must be $0.$
This will affect the choice of $\lambda_0.$
Indeed, using (\ref{moperator}),
%
\begin{equation}
\label{evaluategoflambdau} 
g[\lambda_0 \mf{u}] 
= \sum_{j=1}^N \sum_{i=0}^{M_1^j} \lambda_0 u_{ix}^j
\frac{\pl f[\lambda_0 \mf{u}]}{\pl (\lambda_0 u_{ix}^j)} 
= \lambda_0 \sum_{j=1}^N \sum_{i=0}^{M_1^j} u_{ix}^j
\frac{\pl f[\lambda_0 \mf{u}]}{\pl (\lambda_0 u_{ix}^j)}.
\end{equation}
Depending on the form of $f,$ there are two choices for $\lambda_0$ that 
make $g[\lambda_0 \mf{u}] = 0.$
\renewcommand{\labelenumi}{\roman{enumi}.}
\begin{enumerate}
\item 
If $g[\lambda_0 \mf{u}]$ in (\ref{evaluategoflambdau}) is an expression in 
fractional form where $\lambda_0$ is a factor in the {\it denominator}, 
then let $\lambda_0 \to \infty$ to get $g[\lambda_0 \mf{u}] = 0.$
The integral on the right-hand side of (\ref{laststepMinverse})
will then converge for $\lambda_0 \to \infty$.

\item For all other cases, provided $\lambda_0$ does not drop out of 
(\ref{evaluategoflambdau}) after simplification, set $\lambda_0 = 0$ 
to get $g[\lambda_0 \mf{u}] = 0.$
In the simplest case, when $f$ (and therefore also $g)$ is a homogeneous 
{\it monomial} of degree $p$, then $g[\lambda \mf{u}]$ has as factor 
$\lambda^p$ and, consequently, $g[\lambda_0 \mf{u}] = 0$ for $\lambda_0 = 0.$
\end{enumerate}
With $g[\lambda_0 \mf{u}] = 0$, apply $\M^{-1}$ to both sides of
(\ref{laststepMinverse}) to get (\ref{inverseMthm}).
\end{proof}
\renewcommand{\labelenumi}{\arabic{enumi}.}
Note that Theorem \ref{inverseMtheorem} excludes the case where 
$\M\,f(x, \mf{u}^{(M)}(x)) = 0.$
If $f \in \mathrm{Ker}\,\M$, then $g(x, \mf{u}^{(M)}(x)) = 0$ and 
Theorem \ref{inverseMtheorem} becomes trivial.
In view of Theorem \ref{TheoremkernelM}, the homotopy operator cannot 
be applied to expressions of degree $0.$ 
In particular, the homotopy operator would incorrectly handle integrands 
involving ratios of polynomial or irrational differential functions of 
like degree.
In Section \ref{seckerM}, we provide a method to overcome this 
shortcoming of the homotopy operator.

To show that (\ref{inverseMthm}) does indeed invert the $\M$ operator, 
in the next example we apply $\M^{-1}$ to the result of 
Example \ref{moperatorexpl}.
\begin{example}
Let $\mf{u} = (u^1,u^2) = (u,v)$ and 
$g(x, \mf{u}^{(M)}(x)) = (p+q+r)$
$\left( u \right)^p \left( v_{2x} \right)^q \left( u_{5x} \right)^r$.
Then, 
\begin{eqnarray}
\label{minverseresult}
\M^{-1} g(x, \mf{u}^{(M)}(x)) 
&=& \int_0^1 (p+q+r)\left( \lambda u \right)^p \left( \lambda v_{2x} \right)^q
\left( \lambda u_{5x} \right)^r \frac{d\,\lambda}{\lambda} 
\nonumber \\ 
&=& (p+q+r) \left(  u \right)^p \left(  v_{2x} \right)^q
\left( u_{5x} \right)^r \int_0^1 \lambda^{p+q+r-1} d\,\lambda
\nonumber \\ 
&=& (p+q+r) \left(  u \right)^p \left(  v_{2x} \right)^q
\left( u_{5x} \right)^r
\left. \Big( \frac{\lambda^{p+q+r}}{p+q+r} \Big)\right|^{1}_{0}
\nonumber \\ 
&=& \left(  u \right)^p \left(  v_{2x} \right)^q \left(  u_{5x} \right)^r.
\end{eqnarray}
As with all polynomial integrands, we took $\lambda_0 = 0$ as the lower 
limit of the homotopy integral.
Note that (\ref{minverseresult}) is identical to $f$ given in 
Example \ref{moperatorexpl}.
\end{example}
\begin{remark}
Let $f(x, \mf{u}^{(M)}(x)) = f_1 + f_2 + \cdots + f_i + \cdots + f_q$,
where each $f_i = f_i(x, \mf{u}^{(M)}(x))$ is a single-term differential 
function (not necessarily a monomial).  
Theorem \ref{inverseMtheorem} holds for $f$ provided that $\M f_i \ne 0$ 
for $i = 1, \dots, q.$
\end{remark}
%
%
Next, we establish the commutation of the degree operator (and its inverse) 
with the total derivative operator.
%
\begin{lemma}
\label{MDcommute1D}
$\M \D_x = \D_x \M$ and $\M^{-1} \D_x = \D_x \M^{-1}$.
\end{lemma}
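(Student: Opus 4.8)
The plan is to prove the first identity $\M \D_x = \D_x \M$ directly from the definitions by a careful bookkeeping of how $\D_x$ and $\M$ act on each jet variable $u_{ix}^j$, and then to deduce the operator identity $\M^{-1}\D_x = \D_x\M^{-1}$ from it by a formal conjugation argument. First I would fix a single dependent variable $u^j$ and a differential function $f = f(x,\mf{u}^{(M)}(x))$, and compute $\D_x(\M f)$ using Definition~\ref{moperatordefn}. Applying $\D_x$ to $\M f = \sum_{j,i} u_{ix}^j \,\pl f/\pl u_{ix}^j$ and using the product rule together with the formula for $\D_x$ from the first Definition of the excerpt, I would get two groups of terms: one in which $\D_x$ hits the explicit factor $u_{ix}^j$, producing $\sum_{j,i} u_{(i+1)x}^j\,\pl f/\pl u_{ix}^j$, and one in which $\D_x$ hits $\pl f/\pl u_{ix}^j$, producing $\sum_{j,i} u_{ix}^j \,\D_x(\pl f/\pl u_{ix}^j)$.

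Then I would compute $\M(\D_x f)$. Writing $\D_x f = \pl f/\pl x + \sum_{j,k} u_{(k+1)x}^j\,\pl f/\pl u_{kx}^j$ and applying $\M$, I again expect two groups of terms: one where $u_{ix}^j\,\pl/\pl u_{ix}^j$ differentiates the explicit Jacobian-type factors $u_{(k+1)x}^j$ (which survives only when $i = k+1$), and one where it differentiates the inner partials $\pl f/\pl u_{kx}^j$ and commutes with $\D_x$ applied to $f$ (since $\pl/\pl u_{ix}^j$ and $\D_x$ interact by a standard Leibniz-type identity, $\pl/\pl u_{ix}^j \circ \D_x = \D_x \circ \pl/\pl u_{ix}^j + \pl/\pl u_{(i-1)x}^j$). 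The key step is to check that, after reindexing the sums, the first group from $\D_x\M f$ matches the first group from $\M\D_x f$, and likewise for the second groups; the correction term $\pl/\pl u_{(i-1)x}^j$ coming from the commutator $[\pl/\pl u_{ix}^j,\D_x]$ is exactly what reconciles the shift in indices. Care is needed at the boundary indices $i=0$ and $i=M_1^j$, and one must note that $\M$ annihilates any pure-$x$ dependence so the $\pl f/\pl x$ term behaves correctly; the paper's standing assumption that $f$ has no terms depending on $\mf{x}$ alone makes this harmless.

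For the second identity, since $\M$ is linear and, on the subspace on which it is invertible (Theorem~\ref{inverseMtheorem}), bijective, the relation $\M\D_x = \D_x\M$ can be conjugated: apply $\M^{-1}$ on the left and on the right of $\M\D_x = \D_x\M$ to obtain $\D_x\M^{-1} = \M^{-1}\D_x$, valid wherever $\M^{-1}$ is defined, i.e.\ on differential functions with no term in $\mathrm{Ker}\,\M$. Alternatively — and this is the cleaner route to avoid domain subtleties — I would verify it through the integral representation \eqref{inverseMthm}: since $(\D_x g)[\lambda\mf{u}] = \D_x(g[\lambda\mf{u}])$ because replacing $\mf{u}$ by $\lambda\mf{u}$ commutes with $x$-differentiation, we get $\M^{-1}\D_x g = \int_{\lambda_0}^1 (\D_x g)[\lambda\mf{u}]\,\tfrac{d\lambda}{\lambda} = \D_x\int_{\lambda_0}^1 g[\lambda\mf{u}]\,\tfrac{d\lambda}{\lambda} = \D_x\M^{-1} g$, moving $\D_x$ outside the $\lambda$-integral since it does not act on $\lambda$. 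The main obstacle I anticipate is purely the combinatorial reindexing in the first part: getting the sums over $i$ and $k$ to line up and confirming that the commutator term $[\pl/\pl u_{ix}^j, \D_x]$ supplies precisely the missing piece, rather than any genuine conceptual difficulty.
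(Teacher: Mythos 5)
Your proposal is correct. Note that the paper itself gives no proof of Lemma~\ref{MDcommute1D} in the text --- it simply defers to the dissertation \cite{Poole09} --- so there is nothing printed here to compare against; your sketch supplies the missing argument, and it is the standard one. The computation of $\D_x(\M f)$ versus $\M(\D_x f)$ goes through exactly as you describe: the commutator identity $\frac{\pl}{\pl u_{ix}^j}\circ \D_x = \D_x\circ\frac{\pl}{\pl u_{ix}^j} + \frac{\pl}{\pl u_{(i-1)x}^j}$ (valid for $i\ge 1$, with no correction term at $i=0$) produces, after the shift $i\to i+1$, precisely the group $\sum_{j,i} u^j_{(i+1)x}\,\pl f/\pl u^j_{ix}$ that the product rule generates on the other side, and the top index $i=M_1^j+1$ in $\M(\D_x f)$ pairs off with the $i=M_1^j$ term of $\D_x(\M f)$ as you anticipate. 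For the second identity, your integral-representation route is the better of the two you offer, since $\M^{-1}$ is only ever \emph{defined} in the paper through formula (\ref{inverseMthm}) of Theorem~\ref{inverseMtheorem}; the observation that $(\D_x g)[\lambda\mf{u}] = \D_x\bigl(g[\lambda\mf{u}]\bigr)$ (because the substitution $u^j_{ix}\mapsto\lambda u^j_{ix}$ commutes with total $x$-differentiation) lets $\D_x$ pass through the $\lambda$-integral with no appeal to invertibility beyond what the theorem already assumes. The conjugation argument is also fine provided one restricts, as you do, to differential functions with no term in $\mathrm{Ker}\,\M$ --- the same restriction under which $\M^{-1}$ exists at all and under which the lemma is actually used in the proof of Theorem~\ref{homotopy1Dtheorem}.
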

\begin{proof}
A proof is given in \cite{Poole09}.
\end{proof}
Now follows the key theorem for this Section which states that the 1\,D 
homotopy operator does indeed integrate an exact differential function.
%
\begin{theorem}
\label{homotopy1Dtheorem}
Let $f = f(x, \mf{u}^{(M)}(x))$ be exact, i.e.\ $\D_x F = f$ for some 
differential function $F(x, \mf{u}^{(M-1)}(x)).$
Then $F = \D_x^{-1} f = \mc{H}_{\mf{u}(x)} f.$
\end{theorem}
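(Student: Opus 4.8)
The plan is to route everything through the degree operator $\M$ and its inverse, since the preceding Lemma~\ref{MDcommute1D} (the commutation $\M\D_x=\D_x\M$) and Theorem~\ref{inverseMtheorem} (the integral formula for $\M^{-1}$) were evidently assembled for exactly this purpose. The cornerstone is a single integration-by-parts identity, valid for \emph{every} differential function $f=f(x,\mf{u}^{(M)}(x))$:
\[
\M f = \sum_{j=1}^N u^j\,\mc{L}_{u^j(x)} f \;+\; \D_x\!\left( \sum_{j=1}^N \mc{I}_{u^j(x)} f \right).
\]
To prove it I would start from $\M f = \sum_j\sum_{k=0}^{M_1^j} u^j_{kx}\,\frac{\pl f}{\pl u^j_{kx}}$ (Definition~\ref{moperatordefn}, equation~(\ref{moperator})), write $u^j_{kx}=\D_x^k u^j$, and apply the telescoping formula $(\D_x^k a)\,b = (-1)^k a\,(\D_x^k b) + \D_x\!\big(\sum_{i=0}^{k-1}(-1)^i (\D_x^{k-1-i}a)(\D_x^{i} b)\big)$ with $a=u^j$ and $b=\pl f/\pl u^j_{kx}$. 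Summing on $k$, the first term becomes $\sum_j u^j\sum_k(-\D_x)^k\pl f/\pl u^j_{kx}=\sum_j u^j\,\mc{L}_{u^j(x)} f$ by~(\ref{zeroeulerscalarux1D}), and the divergence term, after reversing the inner index ($i\mapsto k-1-i$), is exactly $\D_x\sum_j\mc{I}_{u^j(x)} f$ as written in~(\ref{1Dintegrand}).

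Granting that identity, suppose $f$ is exact, $f=\D_x F$. Theorem~\ref{zeroeulerexact} gives $\mc{L}_{u^j(x)} f\equiv 0$ for every $j$, so the identity reduces to $\M f = \D_x\big(\sum_j \mc{I}_{u^j(x)} f\big)$. At the same time, Lemma~\ref{MDcommute1D} gives $\M f = \M\D_x F = \D_x\M F$. Equating the two and using that $\D_x$ annihilates only constants --- which are excluded from the differential functions considered here --- yields $\M F = \sum_{j=1}^N\mc{I}_{u^j(x)} f$. Now apply $\M^{-1}$ via Theorem~\ref{inverseMtheorem} together with the Remark that extends it to finite sums of single-term functions; this is legitimate because an exact $f$ of nonzero degree forces $F$ to have no summand in $\mathrm{Ker}\,\M$. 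The conclusion is
\[
F = \M^{-1}\!\left( \sum_{j=1}^N \mc{I}_{u^j(x)} f \right) = \int_{\lambda_0}^1 \left( \sum_{j=1}^N \mc{I}_{u^j(x)} f \right)\![\lambda\mf{u}]\,\frac{d\lambda}{\lambda} = \mc{H}_{\mf{u}(x)} f ,
\]
where $\lambda_0$ is chosen (typically $0$, or $\infty$ for rational or irrational integrands) so that the boundary term in the derivation of Theorem~\ref{inverseMtheorem} vanishes.

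I expect the main obstacle to be the bookkeeping in the integration-by-parts identity: one has to carry the alternating signs through the telescoping sum and execute the index reversal precisely, so that the divergence term lands on the exact expression~(\ref{1Dintegrand}) defining $\mc{I}_{u^j(x)}$ rather than on a cosmetically similar but unequal one. A smaller but genuine subtlety is the pair of kernel issues in the final step --- the cancellation of $\D_x$ (harmless once constants, i.e.\ functions of $x$ alone, are ruled out, so no spurious constant of integration appears) and the invertibility of $\M$, which fails precisely in the degree-zero case where the homotopy operator itself breaks down; that case sits outside the scope of this theorem and is treated separately in Section~\ref{seckerM}.
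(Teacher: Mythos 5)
Your proof is correct and takes essentially the same route as the paper's: your integration-by-parts identity $\M f = \sum_{j=1}^N u^j\,\mc{L}_{u^j(x)}f + \D_x\bigl(\sum_{j=1}^N \mc{I}_{u^j(x)}f\bigr)$ is precisely (\ref{proofhomotopyuxp5}) summed over $j$ (the paper obtains it by iteratively splitting off $u^j_{ix}\,\pl f/\pl u^j_{ix}$ rather than by your closed-form telescoping Leibniz formula, but the computation is the same). The remaining steps --- using exactness and Theorem \ref{zeroeulerexact} to kill the Euler terms, then inverting $\M$ via Theorem \ref{inverseMtheorem} and commuting with $\D_x$ via Lemma \ref{MDcommute1D} --- coincide with the paper's, differing only in the harmless reordering of whether one strips off $\D_x$ before or after applying $\M^{-1}$.
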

%
\begin{proof}
The proof for the scalar case $(\mf{u} = u)$ in \cite{HeremanDeconinckPoole06} 
can be generalized to the vector case.
Indeed, first consider a fixed component $u^j$ of $\mf{u}$ and multiply
${\mc{L}}_{u^j(x)} f$ by $u^j$ to restore the degree. 
Subsequently, split off $\dsp u^j \frac{\pl f}{\pl u^j}.$ 
This yields
%
\begin{equation}
\label{proofhomotopyuxp1}
u^j {\mc{L}}_{u^j(x)} f
= u^j \sum_{k=0}^{M_1^j} (-\D_x)^k \frac{\pl f}{\pl u^j_{kx}} 
= u^j \frac{\pl f}{\pl u^j} + u^j \sum_{k=1}^{M_1^j} 
(-\D_x)^k \frac{\pl f}{\pl u^j_{kx}}. 
\end{equation}
Next, integrate the last term by parts and split off 
$\dsp u_x^j \frac{\pl f}{\pl u^j_{x}}.$
Continue this process until
$\dsp u_{M_1^j x}^j \frac{\pl f}{\pl u_{M_1^j x}^j}$ is split off.
Continuing with (\ref{proofhomotopyuxp1}), the computations proceed as 
follows:
\vspace*{-0.3cm}
\begin{eqnarray*}
\label{proofhomotopyuxp1bis}
u^j {\mc{L}}_{u^j(x)} f
&=& u^j \frac{\pl f}{\pl u^j} - \D_x \left(u^j \sum_{k=1}^{M_1^j} 
(-\D_x)^{k-1} \frac{\pl f} {\pl u^j_{kx}} \right) + u^j_x
\sum_{k=1}^{M_1^j}(-\D_x)^{k-1} \frac{\pl f}{\pl u^j_{kx}} 
\nonumber \\ 
&=& u^j \frac{\pl f}{\pl u^j} + u^j_x
\frac{\pl f}{\pl u^j_{x}} - \D_x \left( u^j
\sum_{k=1}^{M_1^j} (-\D_x)^{k-1} \frac{\pl f}{\pl u^j_{kx}} \right)
+ u^j_x \sum_{k=2}^{M_1^j}(-\D_x)^{k-1} \frac{\pl f}{\pl u^j_{kx}} 
\nonumber 
\end{eqnarray*}
\vspace{-3mm}
\noindent
\begin{eqnarray*}
\label{proofhomotopyuxp3}
\phantom{u^j \mc{L}_{u^j(x)} f}
&=& u^j \frac{\pl f}{\pl u^j} + u_x^j \frac{\pl f}{\pl u_{x}^j}
- \D_x \left( u^j \sum_{k=1}^{M_1^j} (-\D_x)^{k-1}
\frac{\pl f}{\pl u_{kx}^j} + u^j_x \sum_{k=2}^{M_1^j}(-{\D}_x)^{k-2}
\frac{\pl f}{\pl u_{kx}^j} \right) 
\nonumber \\ 
&& {} + u_{2x}^j \sum_{k=2}^{M_1^j}(-\D_x)^{k-2} \frac{\pl f}{\pl u_{kx}^j} 
\end{eqnarray*}
\vspace{-2mm}
\noindent
\begin{eqnarray*}
\label{proofhomotopyuxp4}
\phantom{u^j \mc{L}_{u^j(x)} f}
&=& \ldots 
\nonumber \\
&=& u^j \frac{\pl f}{\pl u^j} + u_x^j \frac{\pl f}{\pl u_{x}^j} 
+ \ldots + u_{M_1^jx}^j \frac{\pl f}{\pl u_{M_1^jx}^j} 
- \D_x \left( u^j \sum_{k=1}^{M_1^j} (-\D_x)^{k-1}
\frac{\pl f}{\pl u_{kx}^j} \right. 
\nonumber \\ 
&& \left. {} + u_x^j \sum_{k=2}^{M_1^j} (-\D_x)^{k-2}
\frac{\pl f}{\pl u_{kx}^j} + \ldots + u_{(M_1^j-1)x}^j
\sum_{k=M_1^j}^{M_1^j} (-{\D}_x)^{k-M_1^j} \frac{\pl f}{\pl u_{kx}^j } \right)
\end{eqnarray*}
\vspace{-2mm}
\noindent
\begin{eqnarray}
\label{proofhomotopyuxp5}
\phantom{u^j \mc{L}_{u^j(x)} f}
&=& \sum_{i=0}^{M_1^j} u_{ix}^j \frac{\pl f}{\pl u_{ix}^j} 
- \D_x \left( \sum_{i=0}^{M_1^j-1} u_{ix}^j \sum_{k=i+1}^{M_1^j}
(-\D_x)^{k-(i+1)} \frac{\pl f}{\pl u_{kx}^j } \right) 
\nonumber \\ 
&=& \sum_{i=0}^{M_1^j} u_{ix}^j \frac{\pl f}{\pl u_{ix}^j} 
- \D_x \left( \sum_{k=1}^{M_1^j} \left( \sum_{i=0}^{k-1}
u_{ix}^j (-\D_x)^{k-(i+1)} \right) \frac{\pl f}{\pl u_{kx}^j} \right).
\end{eqnarray}
Now, sum (\ref{proofhomotopyuxp5}) over all components $u^j$ to get
\vspace{-2mm}
\noindent
\begin{eqnarray*}
\sum_{j=1}^N u^j 
{\mc{L}}_{u^j(x)} f
&=& \sum_{j=1}^N \sum_{i=0}^{M_1^j} u_{ix}^j
\frac{\pl f}{\pl u_{ix}^j} - \sum_{j=1}^N
\D_x \left( \sum_{k=1}^{M_1^j} \left( \sum_{i=0}^{k-1} u_{ix}^j
(-\D_x)^{k-(i+1)} \right) \frac{\pl f}{\pl u_{kx}^j} \right) 
\nonumber \\ 
&=& \M f - \D_x \left( \sum_{j=1}^N \sum_{k=1}^{M_1^j} 
\left( \sum_{i=0}^{k-1} u_{ix}^j (-\D_x)^{k-(i+1)} \right) 
\frac{\pl f}{\pl u_{kx}^j} \right). 
\nonumber
\end{eqnarray*}
Since $f$ is exact, by Theorem \ref{zeroeulerexact}
$\mc{L}_{u^j(x)} f \equiv 0$ for $j = 1, \ldots, N,$
which implies that $\sum_{j=1}^N \mc{L}_{u^j(x)} f \equiv 0.$
Hence,
\begin{equation}
\label{mf}
{\M} f = \D_x \left( \sum_{j=1}^N
\sum_{k=1}^{M_1^j} \left( \sum_{i=0}^{k-1} u_{ix}^j
(-\D_x)^{k-(i+1)} \right) \frac{\pl f}{\pl u_{kx}^j} \right).
\end{equation}
Apply $\M^{-1}$ to both sides and replace $\M^{-1} \D_x$ by $\D_x \M^{-1}$
using Lemma \ref{MDcommute1D}.
Thus,
\begin{equation}
\label{expressionf}
f = \D_x \left( \M^{-1} \sum_{j=1}^N
\sum_{k=1}^{M_1^j} \left( \sum_{i=0}^{k-1} u_{ix}^j
(-\D_x)^{k-(i+1)} \right) \frac{\pl f}{\pl u_{kx}^j} \right).
\end{equation}
Apply $\D_x^{-1}$ to both sides of (\ref{expressionf}) and use 
(\ref{inverseMthm}) to obtain
\begin{equation}
\label{expressionF}
\D_x^{-1} f = \int_{\lambda_0}^{1} \left( \sum_{j=1}^N
\sum_{k=1}^{M_1^j} \left( \sum_{i=0}^{k-1} u_{ix}^j
(-\D_x)^{k-(i+1)} \right) 
\frac{\pl f}{\pl u_{kx}^j} \right)[\lambda \mf{u}]\,\frac{d \lambda}{\lambda}.
\end{equation}
The right hand side of (\ref{expressionF}) is identical to 
(\ref{1Dhomotopy}) with (\ref{1Dintegrand}).
\end{proof}
%
%
The homotopy operator (\ref{1Dhomotopy}) has been coded in {\it Mathematica} 
syntax and is part of the package {\tt HomotopyIntegrator.m} 
\cite{PooleHereman09software1} .
In extensive testing \cite{Poole09}, the expanded formula (\ref{1Dhomotopy})
out performed the implementation of the homotopy operator in 
\cite{HeremanAll05,HeremanIJQC06}, 
dramatically reducing CPU time on complicated expressions.
The code {\tt HomotopyOperator.m} also integrates differential expressions 
that {\it Mathematica's} {\courier Integrate} function fails to integrate.
For example, {\courier Integrate} cannot integrate the (exact) expression
\begin{equation}
\label{simplef}
f = u_x v_{2x} \cos u + v_{3x} \sin u - v_{4x}.
\end{equation}
The {\tt HomotopyIntegrator.m} code returns 
$F = \D_x^{-1} f = v_{2x} \sin u - v_{3x}.$ 
It is easy to verify that $\D_x F = f.$
Needless to say, one does not need the homotopy operator to integrate 
simple expressions like (\ref{simplef}) for it can easily be done with pen 
on paper. 
The homotopy operator is a tool for integrating long and more complicated 
differential functions, in particular, those where {\it Mathematica}'s 
{\courier Integrate} function fails and integration by hand becomes 
intractable or prone to errors. 
\vspace{-3mm}
\noindent
\section{Multi-Dimensional Homotopy Operators}
The 2\,D and 3\,D homotopy operators invert a divergence in 2\,D and 3\,D, 
respectively.
Such inversions are considerably more difficult to do by hand.
The 2\,D and 3\,D homotopy operators given in this paper 
were developed using the technique of proof in Theorem \ref{homotopy1Dtheorem}.
Therefore, the homotopy operators presented below are different from the ones 
in \cite{HeremanAll05,HeremanIJQC06,DeconinckNivala05,Cheviakov09}, 
where they were defined in terms of higher-order Euler operators.
Like in the 1\,D case, the expanded versions of multi-dimensional homotopy 
operators require considerably less computations \cite{Poole09}.
\begin{definition}
\label{twoDhomotopyoperator}
Let $f(x,y,\mf{u}^{(M)}(x,y))$ be an exact differential function involving 
two independent variables $\mf{x} = (x,y).$
The 2\,D homotopy operator is a ``vector" operator with two components,
\[
\left(\mc{H}_{\mf{u}(x,y)}^{(x)} f, \mc{H}_{\mf{u}(x,y)}^{(y)} f \right),
\]
where
\begin{equation}
\label{2Dhomotopy} 
\mc{H}_{\mf{u}(x,y)}^{(x)} f
\!=\! 
\int_{\lambda_0}^1 \left( \sum_{j=1}^{N} 
\mc{I}_{u^j(x,y)}^{(x)} f \right)
[\lambda {\mf{u}}]\,\frac{d \lambda}{\lambda}  
\,\;{\rm and}\;\,
\mc{H}_{\mf{u}(x,y)}^{(y)} f
\!=\! 
\int_{\lambda_0}^1 \left( 
\sum_{j=1}^{N} 
\mc{I}_{u^j(x,y)}^{(y)} f \right)[\lambda {\mf{u}}]\,\frac{d \lambda}{\lambda}.
\end{equation}
The $x$-integrand, $\mc{I}_{u^j(x,y)}^{(x)} f$, is given by 
\begin{eqnarray}
\label{integrand2Dx}
\!\!\!\!\!\!\!\!\!\!\mc{I}_{u^j(x,y)}^{(x)} f =
\sum_{k_1=1}^{M_1^j} \sum_{k_2=0}^{M_2^j}
\left( \sum_{i_1=0}^{k_1-1} \sum_{i_2=0}^{k_2}
B^{(x)}\, u_{i_1 x\,i_2 y}^j \left( -\D_x \right)^{k_1-i_1-1}
\!\left( -\D_y \right)^{k_2-i_2} \right) \frac{\pl f}{\pl u_{k_1 x\,k_2 y}^j},
\end{eqnarray}
with combinatorial coefficient $B^{(x)} = B(i_1,i_2,k_1,k_2)$ defined as
\[
B(i_1, i_2, k_1, k_2)  = \frac{{i_1 + i_2 \choose i_1}
{k_1 + k_2 - i_1 - i_2 - 1 \choose k_1 - i_1 - 1}}{{k_1 + k_2 \choose k_1}}.
\]
%
Similarly, the $y$-integrand, $\mc{I}_{u^j(x,y)}^{(y)} f,$ is defined as
\vspace*{-2mm}
\begin{eqnarray}
\label{integrand2Dy} \!\!\!\!\!\!\!\!\!\!
\mc{I}_{u^j(x,y)}^{(y)}f =
\sum_{k_1=0}^{M_1^j} \sum_{k_2=1}^{M_2^j}
\left( \sum_{i_1=0}^{k_1} \sum_{i_2=0}^{k_2-1} 
B^{(y)}\, u_{i_1 x\,i_2 y}^j \left( -\D_x \right)^{k_1-i_1}
\!\left( -\D_y \right)^{k_2-i_2-1} \right)\frac{\pl f}{\pl u_{k_1 x\,k_2 y}^j},
\end{eqnarray}
where $B^{(y)} = B(i_2,i_1,k_2,k_1).$
\end{definition}
%
\begin{definition}
\label{threeDhomotopyoperator}
Let $f( \mf{x}, \mf{u}^{(M)}(\mf{x}) )$ be a differential function of three 
independent variables 
where $\mf{x} = (x,y,z).$
The homotopy operator in 3\,D is a three-component (vector) operator,
%
\[
\left( \mc{H}_{\mf{u}(x,y,z)}^{(x)} f, \mc{H}_{\mf{u}(x,y,z)}^{(y)} f, 
\mc{H}_{\mf{u}(x,y,z)}^{(z)} f \right),
\]
where the $x$-component is given by
%
\[
\mc{H}_{\mf{u}(x,y,z)}^{(x)} f 
= \int_{\lambda_0}^1 
\left( \sum_{j=1}^{N} \mc{I}_{u^j(x,y,z)}^{(x)} f \right)
[\lambda {\mf{u}}]\,\frac{d \lambda}{\lambda}.
\]
The $y$- and $z$-components are defined analogously.
The $x$-integrand is given by 
\begin{eqnarray*}
\label{integrand3Dx}
I_{u^j(x,y,z)}^{(x)} f\!&=&\!\sum_{k_1=1}^{M_1^j}
\sum_{k_2=0}^{M_2^j} \sum_{k_3=0}^{M_3^j}
\sum_{i_1=0}^{k_1-1} \sum_{i_2=0}^{k_2} \sum_{i_3=0}^{k_3}
\left( B^{(x)}\, u_{i_1 x\,i_2 y\,i_3 z}^j \right. 
\nonumber \\ 
&& \ \ \ \ \ \ \ \ \ \ \left. \left( -\D_x \right)^{k_1-i_1-1}
\left( -\D_y \right)^{k_2-i_2} \left( -\D_z \right)^{k_3-i_3} \right)
\frac{\pl f}{\pl u_{k_1 x\,k_2 y\,k_3 z}^j}, 
\nonumber 
\end{eqnarray*}
with combinatorial coefficient $B^{(x)} = B(i_1, i_2, i_3, k_1, k_2, k_3)$ 
defined as
\[
B(i_1, i_2, i_3, k_1, k_2, k_3)
= \frac{{i_1 + i_2 + i_3 \choose i_1}\, {i_2 + i_3 \choose i_2}\,
{k_1 + k_2 + k_3 - i_1 - i_2 - i_3 - 1 \choose k_1 - i_1 - 1}\,
{k_2 + k_3 - i_2 - i_3 \choose k_2 - i_2}}
{{k_1 + k_2 + k_3 \choose k_1}\,{k_2 + k_3 \choose k_2}}.
\]
%
As one might expect, the integrands $I_{u^j(x,y,z)}^{(y)} f$ and
$I_{u^j(x,y,z)}^{(z)} f $ are defined analogously. 
Based on cyclic permutations, they have combinatorial coefficients 
$B^{(y)} = B(i_2,i_3,i_1,k_2,k_3,k_1)$ and 
$B^{(z)} = B(i_3,i_1,i_2,k_3,k_1,k_2)$, respectively.
\end{definition}
The homotopy with $\lambda_0 = 0$ is used, except when singularities 
at $\lambda = 0$ occur.
\begin{theorem}
\label{homotopy2D3Dtheorem}
Let $f = f(\mf{x}, \mf{u}^{(M)}(\mf{x}))$ be exact, i.e.\ 
$f = \mathrm{Div}\,\mf{F}$ for some $\mf{F}(\mf{x}, \mf{u}^{(M-1)}(\mf{x})).$
Then, in the 2\,D case, 
$\mf{F} = \mathrm{Div}^{-1} f = 
\left(\mc{H}_{\mf{u}(x,y)}^{(x)} f, \mc{H}_{\mf{u}(x,y)}^{(y)} f \right).$
Analogously, in 3\,D, 
$\mf{F} = \mathrm{Div}^{-1} f = 
\left( \mc{H}_{\mf{u}(x,y,z)}^{(x)} f, \mc{H}_{\mf{u}(x,y,z)}^{(y)} f, 
\mc{H}_{\mf{u}(x,y,z)}^{(z)} f \right).$
\end{theorem}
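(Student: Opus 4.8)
The plan is to reproduce, one coordinate direction at a time, the scheme of the proof of Theorem~\ref{homotopy1Dtheorem}; the only genuinely new ingredient is a combinatorial integration-by-parts identity that rewrites $u^j(-\D_x)^{k_1}(-\D_y)^{k_2}g$ as a ``diagonal'' term plus a pure $\D_x$-divergence plus a pure $\D_y$-divergence. I describe the 2\,D case; the 3\,D case is word for word the same, with the ordered pair $(\D_x,\D_y)$ replaced by the cyclically ordered triple $(\D_x,\D_y,\D_z)$, the two-index coefficient $B(i_1,i_2,k_1,k_2)$ by its three-index analogue, and ``$\D_x(\cdot)+\D_y(\cdot)$'' replaced throughout by ``$\D_x(\cdot)+\D_y(\cdot)+\D_z(\cdot)$''.

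First I fix a component $u^j$ of $\mf u$ and, as in (\ref{proofhomotopyuxp1}), multiply $\mc L_{u^j(x,y)}f$ by $u^j$ to restore the degree. Writing $g_{k_1k_2}=\pl f/\pl u^j_{k_1x\,k_2y}$, the core claim is that for every $(k_1,k_2)$,
\[
u^j(-\D_x)^{k_1}(-\D_y)^{k_2}g_{k_1k_2}
= u^j_{k_1x\,k_2y}\,g_{k_1k_2}-\D_x\big(P^{(x)}_{k_1k_2}\big)-\D_y\big(P^{(y)}_{k_1k_2}\big),
\]
where $P^{(x)}_{k_1k_2}=\sum_{i_1=0}^{k_1-1}\sum_{i_2=0}^{k_2}B^{(x)}\,u^j_{i_1x\,i_2y}(-\D_x)^{k_1-i_1-1}(-\D_y)^{k_2-i_2}g_{k_1k_2}$ is precisely the $(k_1,k_2)$-summand of the unscaled $x$-integrand (\ref{integrand2Dx}), and $P^{(y)}_{k_1k_2}$ is the corresponding summand of (\ref{integrand2Dy}). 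I would prove this by induction on $k_1+k_2$: the case $k_1+k_2=1$ is the Leibniz rule $u^j(-\D_x)h=-\D_x(u^jh)+u^j_xh$, and the inductive step peels off one total derivative by Leibniz, applies the induction hypothesis to the lower-order pieces, and checks that the coefficients reassemble. The relation needed among the $B$'s is the Pascal-type recursion obtained by classifying the monotone lattice paths from $(0,0)$ to $(k_1,k_2)$ according to their first step; it is exactly the recursion built into $B(i_1,i_2,k_1,k_2)={i_1+i_2 \choose i_1}{k_1+k_2-i_1-i_2-1 \choose k_1-i_1-1}/{k_1+k_2 \choose k_1}$, the coefficient being the appropriately weighted count of paths through $(i_1,i_2)$. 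Although any legitimate splitting of the boundary term would prove the theorem, the operators of Definitions~\ref{twoDhomotopyoperator} and~\ref{threeDhomotopyoperator} single out the symmetric, path-averaged splitting, so the identity must be verified for exactly these coefficients; this bookkeeping --- which differs from the 1\,D situation only because the boundary term can now be released in two (or three) directions --- is the step I expect to be the main obstacle, and the full computation is carried out in \cite{Poole09}. As a check, when $k_2=0$ one has $B(i_1,0,k_1,0)=1$, so $P^{(x)}_{k_1,0}=\sum_{i_1=0}^{k_1-1}u^j_{i_1x}(-\D_x)^{k_1-i_1-1}g_{k_1,0}$, recovering (\ref{1Dintegrand}).

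Granting the identity, I sum over $k_1,k_2$ and then over $j$. The ``diagonal'' terms collect to $\sum_{j}\sum_{i_1,i_2}u^j_{i_1x\,i_2y}\,\pl f/\pl u^j_{i_1x\,i_2y}=\M f$ by Definition~\ref{moperatordefn}, while the two divergence terms collect to $\D_x\big(\sum_{j}\widetilde{\mc I}^{(x)}_{u^j(x,y)}f\big)$ and $\D_y\big(\sum_{j}\widetilde{\mc I}^{(y)}_{u^j(x,y)}f\big)$, where the tilde denotes the integrands of (\ref{integrand2Dx})--(\ref{integrand2Dy}) before the substitution $\mf u\mapsto\lambda\mf u$. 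Since $f$ is exact, Theorem~\ref{zeroeulerexact} gives $\mc L_{u^j(x,y)}f\equiv 0$ for each $j$, so $\sum_j u^j\mc L_{u^j(x,y)}f$ vanishes and we obtain the 2\,D analogue of (\ref{mf}),
\[
\M f=\D_x\Big(\sum_{j=1}^N\widetilde{\mc I}^{(x)}_{u^j(x,y)}f\Big)+\D_y\Big(\sum_{j=1}^N\widetilde{\mc I}^{(y)}_{u^j(x,y)}f\Big).
\]

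To finish, I apply $\M^{-1}$ to both sides. By the evident two-variable extension of Lemma~\ref{MDcommute1D} (identical proof), $\M^{-1}$ commutes with $\D_x$ and with $\D_y$, so
\[
f=\D_x\Big(\M^{-1}\!\sum_{j}\widetilde{\mc I}^{(x)}_{u^j(x,y)}f\Big)+\D_y\Big(\M^{-1}\!\sum_{j}\widetilde{\mc I}^{(y)}_{u^j(x,y)}f\Big)=\mathrm{Div}\,(F^1,F^2).
\]
Each $\widetilde{\mc I}^{(x)}_{u^j(x,y)}f$ and $\widetilde{\mc I}^{(y)}_{u^j(x,y)}f$ is a sum of single-term differential functions, so --- under the standing hypothesis, as in the 1\,D case, that none of these summands has degree zero (the degree-zero obstruction being treated separately in Section~\ref{seckerM}) --- Theorem~\ref{inverseMtheorem} together with the Remark following it realize $\M^{-1}$ as $\int_{\lambda_0}^1(\,\cdot\,)[\lambda\mf u]\,d\lambda/\lambda$. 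Comparing with (\ref{2Dhomotopy}) identifies $F^1=\mc H^{(x)}_{\mf u(x,y)}f$ and $F^2=\mc H^{(y)}_{\mf u(x,y)}f$, which is the 2\,D claim; the 3\,D claim follows identically with three integrands and the cyclic coefficients $B^{(x)},B^{(y)},B^{(z)}$. The whole argument collapses to the proof of Theorem~\ref{homotopy1Dtheorem} when a single independent variable is present, and the operators so obtained coincide with the variational-bicomplex homotopy operators for the horizontal (total-divergence) complex.
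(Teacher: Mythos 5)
Your outline follows exactly the strategy the paper indicates: the paper's own ``proof'' merely states that the argument is similar to that of Theorem \ref{homotopy1Dtheorem} and defers the lengthy details to \cite[Appendix A]{Poole09}, and what you describe --- multiply $\mc{L}_{u^j(x,y)}f$ by $u^j$, integrate by parts to peel off the diagonal terms summing to $\M f$, use exactness to kill the Euler-operator contribution, then apply $\M^{-1}$ and commute it past $\D_x$ and $\D_y$ --- is precisely that argument carried out. The one step you leave unverified, the Pascal-type recursion for the coefficients $B(i_1,i_2,k_1,k_2)$ governing how each boundary term is split between the $x$- and $y$-divergences, is exactly the bookkeeping the paper also relegates to \cite{Poole09}, and your low-order checks (e.g.\ $k_2=0$ collapsing to (\ref{1Dintegrand})) are consistent with it.
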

\begin{proof}
The very lengthy proof is similar to that of Theorem \ref{homotopy1Dtheorem}.
It does not add much to the understanding of the theory.
Nevertheless, details can be found in \cite[Appendix A]{Poole09}.
\end{proof}
%
The following 2\,D example demonstrates how (\ref{2Dhomotopy}) works.
\begin{example}
\label{2Dhomotopyexample}
With $\mf{u} = (u,v)$ and $\mf{x} = (x,y),$ let
\begin{eqnarray}
\label{2Dexampledivergence} 
f( \mf{x},\mf{u}^{(3)}(\mf{x}) ) 
&=& u^2 u_{x2y} + 2 u u_x u_{2y} + 3 u_x v_x \cos v 
- 4 u_y v_x v_{xy} - 2 u_{2y} v_x^2 
\nonumber \\ 
&& + v_{2y} \cos u + 3 u_{2x} \sin v - u_y v_y \sin u.
\end{eqnarray}
Using (\ref{integrand2Dx}), compute 
\begin{eqnarray*}
\label{xintegrandsexample}
I_{u(x,y)}^{(x)} 
&=& u \frac{\pl f}{\pl u_x} + \left(u_x \Id - u \D_x \right) 
\frac{\pl f}{\pl u_{2x}} 
+ \frac{1}{3} \left( u_{2y} \Id - u_y \D_y + u \D_y^2 \right) 
\frac{\pl f}{\pl u_{x2y}} 
\nonumber \\
&=& 3 u^2 u_{2y} + 3 u_x \sin v, 
\nonumber \\ 
I_{v(x,y)}^{(x)} 
&=& v \frac{\pl f}{\pl v_x} + \frac{1}{2} \left(v_y \Id - v \D_y \right) 
\frac{\pl f}{\pl v_{xy}} 
\nonumber \\
&=& 3 u_x v \cos v - 2 u_y v v_{xy} -2 u_y v_x v_y - 2 u_{2y} v v_x. 
\end{eqnarray*}
Likewise, using (\ref{integrand2Dy}), compute 
\begin{eqnarray*}
\label{yintegrandsexample}
I_{u(x,y)}^{(y)} 
&=& u \frac{\pl f}{\pl u_y} + \left(u_y \Id - u \D_y \right) 
\frac{\pl f}{\pl u_{2y}}
+ \frac{1}{3} \left(2 u_{xy} \Id - u_y \D_x - u_x \D_y + 2 u \D_x \D_y \right) 
\frac{\pl f}{\pl u_{x2y}} 
\nonumber \\ 
&=& -u v_y \sin u - 2 u_y v_x^2, 
\nonumber \\ 
I_{v(x,y)}^{(y)} 
&=& v \frac{\pl f}{\pl v_y} + \left(v_y \Id - v \D_y \right) 
\frac{\pl f}{\pl v_{2y}}
+ \frac{1}{2} \left( v_x \Id - v \D_x \right) \frac{\pl f}{\pl v_{xy}} 
\nonumber \\ 
&=& 2 u_y v v_{2x} - 2 u_y v_x^2 + 2 u_{xy} v v_x + v_y \cos u.
\end{eqnarray*}
Then, using (\ref{2Dhomotopy}), compute
\begin{eqnarray*}
\label{integrand2Dxexample}
\mc{H}_{\mf{u}(x,y)}^{(x)} f
&=& \int_0^1 \left( 
\mc{I}_{u(x,y)}^{(x)} f + \mc{I}_{v(x,y)}^{(x)} f \right)
[\lambda \mf{u}]\,\frac{d \lambda}{\lambda}
\nonumber \\ 
&=& \int_0^1 \left( 3 \lambda^2 u^2 u_{2y} + 3 u_x \sin \lambda v
+ 3 \lambda u_x v \cos \lambda v - 2 \lambda^2 u_y v v_{xy} \right.
\nonumber \\ 
&& \left. 
- 2 \lambda^2 u_y v_x v_y - 2 \lambda^2 u_{2y} v v_x \right)\, d \lambda
\nonumber \\ 
&=& u^2 u_{2y} + 3 u_x \sin v - \tfrac{2}{3} u_y v v_{xy}
- \tfrac{2}{3} u_y v_x v_y - \tfrac{2}{3} u_{2y} v v_x, 
\nonumber \\ 
\mc{H}_{\mf{u}(x,y)}^{(y)} f 
&=& \int_0^1 \left( 
\mc{I}_{u(x,y)}^{(y)} f + \mc{I}_{v(x,y)}^{(y)} f \right)
[\lambda \mf{u}]\,\frac{d \lambda}{\lambda}
\nonumber \\ 
&=& \int_0^1 \!\! \left( v_y \cos \lambda u
- \lambda u v_y \sin \lambda u - 4 \lambda^2 u_y v_x^2
+ 2 \lambda^2 u_y v v_{2x} + 2 \lambda^2 u_{xy} v v_x \right)\,d \lambda
\nonumber \\ 
&=& \tfrac{2}{3} u_y v v_{2x} + \tfrac{2}{3} u_{xy} v v_x
- \tfrac{4}{3} u_y v_x^2 + v_y \cos u.
\end{eqnarray*}
Thus, the homotopy operator gives the vector
\begin{equation}
\label{homotopyresult2Dexample}
\mf{F} = \mathrm{Div}^{-1} f 
= \left( \begin{array}{c}
u^2 u_{2y} + 3 u_x \sin v - \frac{2}{3} u_y v v_{xy}
- \frac{2}{3} u_y v_x v_y - \frac{2}{3} u_{2y} v v_x \\ 
\frac{2}{3} u_y v v_{2x} + \frac{2}{3} u_{xy} v v_x
- \frac{4}{3} u_y v_x^2 + v_y \cos u 
\end{array} \right).
\end{equation}
\end{example}
Obviously, there are infinitely many choices for $\mf{F}$ because the 
addition to $\mf{F}$ of a 2\,D ``curl vector,'' 
$\mf{K} =  (\D_y \theta, - \D_x \theta),$ where $\theta$ 
is an arbitrary differential expression, will produce an identical divergence.
Indeed, 
$\mathrm{Div}\,\mf{G} = \mathrm{Div}\,(\mf{F} + \mf{K}) 
= \mathrm{Div}\,\mf{F} + \D_x(\D_y \theta) - \D_y(\D_x \theta) = 
\mathrm{Div}\,\mf{F} = f.$
The same happens in 3\,D where a curl vector,
$\mf{K} = (\D_y \eta - \D_z \xi, 
\D_z \theta - \D_x \eta, \D_x \xi - \D_y \theta)$ involves three arbitrary 
differential functions $\theta, \eta$ and $\xi.$
Often a concise result can be obtained for (\ref{homotopyresult2Dexample}) 
by removing undesirable curl terms.
This is the subject of the next section.
\vspace{-3mm}
\noindent
\section{Removing Curl Terms}
\label{secremovecurl}
The homotopy operator acting on an exact multi-dimensional differential
function returns an $\mf{F}$ which includes a curl vector.
Although the homotopy operator consistently leads to a particular choice for
the curl vector, $\mf{F}$ can be of unmanageable size due to the presence of 
dozens, if not hundreds, of unwanted terms.
Removing the curl terms makes $\mf{F}$ shorter.

There are several methods for removing curl vectors.
Doing so by hand requires educated guess work. 
Ideally, one could modify the integrand of the homotopy operator so that 
the least number of curl terms would be generated. 
We are currently investigating this strategy.
For now, we propose a simple, yet effective algorithm that uses only linear 
algebra.

To remove the curl terms from (\ref{homotopyresult2Dexample}), first 
attach undetermined coefficients, $k_1, \dots, k_p$, to the terms of 
(\ref{homotopyresult2Dexample}) and call the new vector $\mf{\tilde{F}}.$
Doing so, 
\begin{equation}
\label{veccurlconstants}
\mf{\tilde{F}}= \left( \begin{array}{c}
k_1 u^2 u_{2y} + k_2 u_x \sin v + k_3 u_y v v_{xy}
+ k_4 u_y v_x v_y +  k_5 u_{2y} v v_x \\ 
k_6 u_y v v_{2x} + k_7 u_{xy} v v_x
+ k_8 u_y v_x^2 + k_9 v_y \cos u 
\end{array} \right).
\end{equation}
Next, compute 
\begin{eqnarray}
\label{divwithconstants}
\mathrm{Div}\,\mf{\tilde{F}} 
&=& 2 k_1 u u_x u_{2y} + k_1 u^2 u_{x2y} + k_2 u_x v_x \cos v 
+ k_2 u_{2x} \sin v 
\nonumber \\ 
&& {} \!+\! (k_3 + k_4 + 2 k_8) u_y v_x v_{xy} + (k_3 + k_6) u_y v v_{2xy}
+ (k_3 + k_7) u_{xy} v v_{xy} 
\nonumber \\ 
&& {} \!+\! (k_4 + k_6) u_y v_{2x} v_y + (k_4 + k_7) u_{xy} v_x v_y
+ (k_5 + k_6) u_{2y} v v_{2x} 
\nonumber \\ 
&& {} \!+\! (k_5 + k_7) u_{x2y} v v_x + (k_5 + k_8) u_{2y} v_x^2
+ k_9 v_{2y} \cos u - k_9 u_y v_y \sin u.
\end{eqnarray}
Since $\mf{F}$ in (\ref{homotopyresult2Dexample}) and $\tilde{\mf{F}}$ 
in (\ref{veccurlconstants}) should differ only by a curl vector, 
their divergences must be identical.  
Thus, 
$\mathrm{Div}\,\mf{\tilde{F}} = \mathrm{Div}\,\mf{F} \equiv f$ in 
(\ref{2Dexampledivergence}).
Gathering like terms in (\ref{2Dexampledivergence}) and 
(\ref{divwithconstants}) leads to the linear system
\begin{equation}
\label{systemforexample}
\begin{array}{rcrrcrrcrrcr}
k_1 & = & 1, & \qquad \quad k_2 & = & 3, 
& \quad k_3 + k_4 + 2k_8 & = & -4, & \quad k_3 + k_6 & = & 0,
\\ 
k_3 + k_7 & = & 0, & k_4 + k_6 & = & 0,
& k_4 + k_7 & = & 0, & k_5 + k_6 & = & 0,
\\ 
k_5 + k_7 & = & 0, & k_5 + k_8 & = & -2, & k_9 & = & 1. & & &
\end{array}
\end{equation}
%
To find the undetermined coefficients that will produce a $\tilde{\mf{F}}$ 
without curl terms, first solve for the $k_i$ that appear only in equations 
with non-zero right-hand sides.
In (\ref{systemforexample}), clearly $k_1 = 1$, $k_2 = 3$, and $k_9 = 1.$
The next variable to solve is $k_8$ since it appears in equations with
non-zero right hand sides.
Coefficients $k_3$ through $k_7$ appear in equations with zero right-hand 
sides; these equations are solved last.
Solving (\ref{systemforexample}) in this order yields
\begin{equation}
\label{solutionexample}
\begin{array}{rcrrcrrclrcr}
k_1 & = & 1, & \qquad k_2 & = & 3, & \qquad k_3 & = & -k_7, 
& \qquad k_4 & = & -k_7, 
\\ 
k_5 & = & -k_7, & k_6 & = & k_7, & k_8 & = & k_7 - 2, & k_9 & = & 1,
\end{array}
\end{equation}
where $k_7$ is arbitrary. 
Set $k_7 = 0$ to eliminate as many terms as possible in 
(\ref{veccurlconstants}).
Substitute (\ref{solutionexample}) with $k_7 = 0$ into 
(\ref{veccurlconstants}), to get
\begin{equation}
\label{finalFtilde}
\mf{\tilde{F}} = \left( \begin{array}{c}
u^2 u_{2y} + 3 u_x \sin v \\ 
v_y \cos u - 2 u_y v_x^2 
\end{array} \right).
\end{equation}
By construction, 
$\mathrm{Div}\,\mf{\tilde{F}} = \mathrm{Div}\,\mf{F} = f,$ but 
(\ref{finalFtilde}) is free of curl terms.
\vspace{-3mm}
\noindent
\section{Extending the Applicability of the Homotopy Operator}
\label{seckerM}
It follows from (\ref{mf}), that the 1\,D homotopy operator in 
Definition \ref{oneDhomotopyoperator}, will not work when 
$f(x,\mf{u}^{(M)}(x))$ (or for that matter any term of it) belongs 
to $\mathrm{Ker}\,\M.$
Indeed, if $\M f = 0$, in particular when $f$ satisfies the conditions of
Theorem \ref{TheoremkernelM}, then 
$\sum_{j=1}^N \mc{I}_{u^j(x)} f = C$, where $C$ is an arbitrary constant.
Thus, the homotopy operator would return $C,$ in fact $0$ for it ignores 
constants of integration.

The next two examples (in 1\,D) demonstrate a problem that occurs when 
some terms of $f$ belong to $\mathrm{Ker}\,\M$ while others do not.
\begin{example}
\label{degree0example1D}
Let $\mf{u} = (u^1,u^2) = (u,v)$ and $F = \frac{u_x}{v}.$
Then,
\[
f = \D_x F = \frac{u_{2x} v - u_x v_x}{v^2} 
  = \frac{u_{2x}}{v} - \frac{u_x v_x}{v^2}.
\]
Obviously, $f$ is of degree zero.
Hence, $\M f = 0$ by Theorem \ref{TheoremkernelM}.
Therefore, trying to integrate $f$ with the homotopy operator will fail.
Indeed, using (\ref{1Dintegrand}) for $u(x)$ and $v(x)$, respectively, yields
\begin{eqnarray*}
\label{integrandsexamplekernel1}
I_{u(x)} f 
&=& u \frac{\pl f}{\pl u_x} + (u_x \Id - u \D_x) \frac{\pl f}{\pl u_{2x}}
= - u \frac{v_x}{v^2} + u \frac{v_x}{v^2} + \frac{u_x}{v}
= \frac{u_x}{v}, 
\nonumber \\
I_{v(x)} f 
&=& v \frac{\pl f}{\pl v_x} 
= - v \frac{u_x}{v^2} 
= - \frac{u_x}{v}.
\nonumber
\end{eqnarray*}
Clearly, $I_{u(x)} f + I_{v(x)} f = 0$ and (\ref{1Dhomotopy}) gives 
$0$ instead of $F.$
\end{example}
\noindent 
Furthermore, if $f$ is the sum of monomial, rational, or irrational terms
(and fractions thereof) where any of these terms are in $\mathrm{Ker}\,\M,$
then the homotopy operator will return an incorrect result. 
Indeed, the homotopy operator would correctly integrate the terms not in 
$\mathrm{Ker}\,\M,$ but annihilate the terms in $\mathrm{Ker}\,\M.$  
The following example highlights this situation.
\begin{example}
\label{degree0examplemultiterm}
Let $\mf{u} = (u^1,u^2) = (u,v)$ and $F = \frac{u^2 + v}{u - v}.$  
Then,
\begin{eqnarray}
\label{exampleftroubleterms}
f = \D_x F &=& 
\frac{u^2 u_x + u^2 v_x - 2 u u_x v + u v_x - u_x v}{(u - v)^2}
\nonumber \\
&=& \frac{u^2 u_x}{(u - v)^2} + \frac{u^2 v_x}{(u - v)^2} 
-2\frac{u u_x v}{(u - v)^2} + \frac{u v_x}{(u - v)^2} -\frac{u_x v}{(u - v)^2}
\end{eqnarray}
is exact by construction. 
Applying the degree operator to $f,$ term by term, gives
\begin{displaymath}
\label{Mtoexampleftroubleterms}
\M f = \frac{u^2 u_x}{(u - v)^2} + \frac{u^2 v_x}{(u - v)^2}
       - 2 \frac{u v u_x}{(u - v)^2} + 0 + 0.
\nonumber
\end{displaymath}
Because the last two terms of (\ref{exampleftroubleterms}) are annihilated 
by $\M,$ the homotopy operator will incorrectly integrate $f.$  
One would obtain $F = \frac{u^2}{u - v}$ instead of 
$F = \frac{u^2 + v}{u - v}.$  
\end{example}
A method to overcome these shortcomings will now be proposed.
A coordinate for the ``jet'' space where $f(\mf{x},\mf{u}^{(M)}(\mf{x}))$
resides is given by
\[
\mf{u}^{(M)}(\mf{x}) =
(u^1,u_x^1,u_y^1,u_z^1,u_{2x}^1,u_{2y}^1,u_{2z}^1,u_{xy}^1,
\dots, u^2, u_x^2, \dots, u^N_{M_1^N x\, M_2^N y\, M_3^N z}), 
\]
which has origin $\mf{0} = (0, \dots, 0).$  
By shifting the coordinate away from $\mf{0}$, 
$f$ can be taken out of $\mathrm{Ker}\,\M.$
It suffices to shift one of the variables that appear in the denominator 
of the term in $\mathrm{Ker}\,\M$.
After integrating, the shift must be undone to put the integral at the origin.
The following examples illustrate the procedure.
\begin{example}
Returning to Example \ref{degree0example1D}, where 
$f(x, \mf{u}^{(1)}(x)) = \frac{u_{2x} v - u_x v_x}{v^2}.$ 
Since $v$ appears in the denominator, replace $v$ with $v - v_0$ to get
$f_0 = \frac{u_{2x}(v - v_0) - u_x v_x}{(v - v_0)^2}.$
Then use the homotopy operator to integrate $f_0.$
Using (\ref{1Dintegrand}), first compute 
\begin{displaymath}
\mc{I}_{u(x)} f_0 = \frac{u_x}{v-v_0} \qquad 
\mbox{\rm and} \qquad
\mc{I}_{v(x)} f_0 = -\frac{u_x v}{(v-v_0)^2}.
\nonumber
\end{displaymath}
Next, using (\ref{1Dhomotopy}), compute
\begin{eqnarray*}
\label{homotopydegree0example1D}
\mc{H}_{\mf{u}(x)} f_0  
&=& \int_0^1 \left( \frac{u_x}{v-v_0} - \frac{u_x v}{(v-v_0)^2} \right)
[\lambda \mf{u}]\,\frac{d \lambda}{\lambda}
= - \int_0^1 \frac{u_x v_0}{(\lambda v - v_0)^2}\,d \lambda 
\nonumber \\ 
&=& \left. { \frac{u_x v_0}{v (\lambda v - v_0)} }\right|_{0}^{1}
= \frac{u_x}{v - v_0}.
\nonumber
\end{eqnarray*}
Finally, remove the shift by replacing $v$ with $v + v_0$ 
(or simply set $v_0 = 0)$ to get
\[
F = \mc{H}_{\mf{u}(x)} f = \frac{u_x}{v}.
\]
\end{example}
\begin{example}
To integrate $f$ from Example \ref{degree0examplemultiterm}, 
the shift only needs to be applied to the last two terms of
(\ref{exampleftroubleterms}) because these are in $\mathrm{Ker}\,\M.$
Therefore, we will apply the homotopy operator to
\[
f_0 = \frac{u^2 u_x}{(u - v)^2} + \frac{u^2 v_x}{(u - v)^2}
- 2 \frac{u u_x v}{(u - v)^2} + \frac{(u -u_0) v_x}{(u - u_0 - v)^2} 
- \frac{u_x v}{(u - u_0 - v)^2}, 
\]
where $u$ has been replaced by $u - u_0$ in the last two terms.
Note that it suffices to shift just one of the two variables in the 
denominator.
Keeping the number of shifts to a minimum leads to a simpler integrand 
for the integration over $\lambda.$
The integrands for the homotopy operator are
\begin{eqnarray*}
\mc{I}_{u(x)} f_0 
&=& \frac{u^2 (u - 2 v)}{(u - v)^2} - \frac{uv}{(u - u_0 - v)^2} 
\quad \mbox{{\rm and}} \quad 
\mc{I}_{v(x)} f_0 
= \frac{u^2 v}{(u - v)^2} + \frac{v(u - u_0)}{(u - u_0 - v)^2}.
\nonumber
\end{eqnarray*}
The first terms in each integrand have not been shifted. 
However, the second terms have been shifted.
Applying (\ref{1Dhomotopy}),
\begin{eqnarray*}
\mc{H}_{\mf{u}(x)}^{(x)} f_0 
&=& \int_0^1 \left( \frac{u^2}{u - v}
- \frac{u_0 v}{(\lambda u - u_0 - \lambda v)^2} \right)\,d \lambda 
\nonumber \\ 
&=& \left. \Big( \frac{\lambda u^2}{u - v}
+ \frac{u_0 v}{(u - v) (\lambda u - u_0 - \lambda v)} \Big)\right|_{0}^{1}
\nonumber \\ 
&=& \frac{u^2}{u - v} + \left( \frac{u_0 v}{(u - v) ( u - u_0 - v)}
+ \frac{v}{(u - v)} \right)
\nonumber \\ 
&=& \frac{u^2}{u - v} + \frac{v}{u - u_0 - v}.
\nonumber
\end{eqnarray*}
Finally, setting $u_0 = 0$ yields the desired result, 
$F = \frac{u^2 + v}{u - v}.$
\end{example}
Similar difficulties might occur with functions in fractional form 
involving multiple independent variables.
Again, a shift of a coordinate in the denominator will solve the problem.
However, the complexity of the integration over $\lambda$ increases rapidly 
when there are multiple independent variables (with shifts).
Implementation of the 2\,D and 3\,D homotopy operators (in Definitions 
\ref{twoDhomotopyoperator} and \ref{threeDhomotopyoperator})
requires the same amount of care.
A discussion of the kernel of the homotopy operator, including an alternate 
strategy for dealing with the integration of homogeneous expressions can be 
found in \cite{DeconinckNivala08}. 
\vspace{-3mm}
\noindent
\section{Application: Computation of Conservation Laws of Nonlinear PDEs}
This Section covers the application of the homotopy operator to the 
computation of conservation laws of nonlinear PDEs.
We developed a {\it Mathematica} package, {\tt ConservationLawsMD.m} 
\cite{PooleHereman09software2}, that automates these computations for 
nonlinear polynomial PDEs with a maximum of three space variables 
in addition to time.
\begin{definition}
A conservation law for a given PDE with independent variables $t$ and $\mf{x}$
is defined as
\begin{equation}
\label{genconlaw}
\D_t \rho + \mathrm{Div}\,\mf{J} = 0,
\end{equation}
where $\rho = \rho(t, \mf{x}, \mf{u}^{(M)}(t,\mf{x}))$ is the 
{\it conserved density}, 
$\mf{J} = \mf{J} (t, \mf{x}, \mf{u}^{(P)}(t,\mf{x}))$ is the
{\it associated flux}. 
$\D_t$ is the total derivative with respect to $t$ and $\mathrm{Div}$
is the total divergence with respect to $\mf{x}.$ 
(\ref{genconlaw}) is satisfied on solutions of the given PDE 
\cite{AblowitzClarkson91}.
\end{definition}
A conservation law is found by first computing the density, $\rho,$
followed by the computation of the flux $\mf{J}.$
The latter will require the use of the homotopy operator.

Following the approach by Hereman {\it et al}.\ 
\cite{HeremanAll09,HeremanAll05,HeremanIJQC06}, 
a candidate density is built as a linear combination 
(with undetermined coefficients) of differential terms which are invariant 
under the scaling symmetry of the given PDE.
Once the form of $\rho$ is determined, one computes $\D_t \rho$ and removes 
all time-derivatives using the PDE.
By (\ref{genconlaw}), $\D_t \rho$ must be a divergence.
Thus, using Theorem \ref{zeroeulerexact}, one requires that 
%
\[
\mc{L}_{u^j(\mf{x})} (\D_t \rho) \equiv 0, \quad j = 1, \dots, N,
\]
where $\mc{L}_{u^j(\mf{x})}$ is the Euler operator.
This leads to a linear system for the undetermined coefficients. 
Substituting its solution into the candidate for $\rho$ gives the actual
density. 
Finally, the flux $\mathbf{J} = - \mathrm{Div}^{-1} (\D_t\rho)$ is computed 
with the homotopy operator.

To illustrate the method, we compute a conservation law for the 
$(2+1)-$ dimensional Zakharov-Kuznetsov (ZK) equation 
\cite{ZakharovKuznetsov74} which describes ion-acoustic solitons in 
magnetized plasmas.
After re-scaling, the PDE takes the form
\begin{equation}
\label{ZKequation}
u_t + \alpha u u_x + \beta (\Delta u)_x = 0,
\end{equation}
where $\alpha$ and $\beta$ are parameters and 
$\Delta = \frac{\partial^2}{\partial x^2} + \frac{\partial^2}{\partial y^2}$ 
is the Laplacian.
\vspace{-3mm}
\noindent
\subsection{Computation of the Density}
It is easy to verify that (\ref{ZKequation}) is invariant under the 
scaling (dilation) symmetry,
\begin{equation}
\label{ZKscalingsymmetry}
(t,x,y,u) \rightarrow (\lambda^{-3} t, \lambda^{-1} x,
\lambda^{-1} y, \lambda^2 u).
\end{equation}
This scaling symmetry can be computed as follows. 
Assume that (\ref{ZKequation}) scales uniformly under 
\begin{equation}
\label{ZKscalingsymmetrygeneral}
(t,x,y,u) \rightarrow (T,X,Y,U) = (\lambda^{a} t, \lambda^{b} x,
\lambda^{c} y, \lambda^{d} u),
\end{equation}
with undetermined (rational) exponents $a,b,c,$ and $d.$
Applying the chain rule to (\ref{ZKequation}) yields
\begin{eqnarray}
\label{ZKequationnewvars}
&& \lambda^{a-d} \, U_T 
+ \alpha \lambda^{b-2d} \, U U_X 
+ \beta \lambda^{3b-d} \, U_{3X} 
+ \beta \lambda^{b+2c-d} \, U_{X2Y}  
\nonumber \\
&& = \lambda^{a-d} 
( U_T + \alpha \lambda^{b-d-a} \, U U_X + \beta \lambda^{3b-a} \, U_{3X} 
+ \beta \lambda^{b+2c-a} \, U_{X2Y} ) = 0.
\end{eqnarray}
One obtains (\ref{ZKequation}) in the new variables $(T,X,Y,U),$
up to the common factor $\lambda^{a-d},$ if 
$b - d - a = 3 b - a = b + 2 c - a = 0.$
Setting $b = -1,$ one gets $a = -3, c = -1,$ and $d = 2,$
which determines (\ref{ZKscalingsymmetry}).
A more algorithmic method for computing scaling symmetries can be found in
\cite{HeremanAll09,HeremanAll05,HeremanIJQC06}.

The scaling symmetry carries over to conservation laws and therefore can 
be used to construct densities \cite{KruskalAll70}.
Indeed, since (\ref{genconlaw}) holds on solutions of (\ref{ZKequation}), 
both $\rho$ and $\mf{J}$ must obey the scaling symmetry 
(\ref{ZKscalingsymmetry}) of given PDE.
For example, (\ref{ZKequation}) is a conservation law itself, with
\begin{equation}
\label{ZKrank2conlaw}
\rho^{(1)} = u,\quad \mbox \quad 
\mf{J}^{(1)} = (\tfrac{1}{2} \alpha u^2 + \beta u_{2x},\,\, \beta u_{xy}),
\end{equation}
expressing conservation of mass.
It is straightforward to verify that 
\begin{equation}
\label{ZKrank4conlaw}
\rho^{(2)} = u^2,
\; \mbox \;
\mf{J}^{(2)} = 
(\tfrac{2 \alpha}{3} u^3 - \beta (u_x^2 - u_y^2) + 2 \beta u (u_{2x} + u_{2y}),
\,\, - 2 \beta u_x u_y ).
\end{equation}
also are a density-flux pair of (\ref{ZKequation});
the corresponding conservation law expresses conservation of momentum.
Note that the densities in (\ref{ZKrank2conlaw}) and 
(\ref{ZKrank4conlaw}) scale with $\lambda^2$ and $\lambda^4,$ respectively. 
For brevity, we say that $\rho^{(1)}$ has {\it rank} $2;$ 
$\rho^{(2)}$ has rank $4.$
The fluxes $\mf{J}^{(1)}$ and $\mf{J}^{(2)}$ have ranks $4$ and $6.$
As a whole, the conservation laws (\ref{genconlaw}) with 
$(\rho^{(1)},\mf{J}^{(1)})$ and $(\rho^{(2)},\mf{J}^{(2)})$ have ranks 
$5$ and $7,$ respectively.

We will construct a third density, $\rho^{(3)},$ which has rank $6,$
i.e.\ each monomial in $\rho^{(3)}$ depends on $u$ and its derivatives 
so that every term scales with $\lambda^6.$
First, construct the list $\mc{P} = \{u^3,u^2,u\}$ containing powers of the 
dependent variable of rank 6 or less.  
Second, bring all terms in $\mc{P}$ with a rank less that 6 up to rank 6 by 
applying $\D_x$ and $\D_y.$
Considering all possible combinations of derivations yields
\begin{eqnarray*}
\label{ZKrank6qlist}
\mc{Q} &=& \{u^3, u_x^2, u u_{2x}, u_y^2, u u_{2y},
u_x u_y, u u_{xy}, u_{4x}, u_{3xy}, u_{2x2y}, u_{x3y}, u_{4y}\}.
\nonumber
\end{eqnarray*}
Third, remove all terms from $\mc{Q}$ that are divergences (because they 
belong to the flux).
For example, $u_{4x}$ = $\mathrm{Div}\,(u_{3x},0)$.
This leaves
\begin{eqnarray*}
\mc{Q} &=& \{u^3, u_x^2, u u_{2x}, u_y^2, u u_{2y}, u_x u_y, u u_{xy}\}.
\end{eqnarray*}
Fourth, find terms that are divergence-equivalent and remove all but the 
lowest order term.
Two or more terms are divergence-equivalent when a linear combination
of the terms is a divergence.  
For example, $u_x^2$ and $u u_{2x}$ are divergence-equivalent 
since $u_x^2 + u u_{2x} = \mathrm{Div}\,(u u_x,0).$ 
Consequently, $u u_{2x}$ can be removed from $\mc{Q}.$
Doing so, $\mc{Q} = \{u^3, u_x^2, u_y^2, u_x u_y\}.$
If divergences and divergence-equivalent terms were not removed, 
they would lead to trivial and equivalent conservation laws, respectively.

Now, form a candidate density by linearly combining the terms in $\mc{Q}$ 
with undetermined coefficients $c_i,$
\begin{equation}
\label{ZKrank6candidate}
\rho^{(3)} = c_1 u^3 + c_2 u_x^2 + c_3 u_y^2 + c_4 u_x u_y.
\end{equation}
All, part, or none of the candidate density may be an actual density for 
(\ref{ZKequation}).
Indeed, the candidate density may be trivial or a linear combination of two 
or more independent densities.
Computation of the undetermined coefficients will reveal the nature of 
the candidate density.

With (\ref{ZKrank6candidate}), compute 
\begin{equation}
\label{ZKDtrhorank6}
\D_t \rho^{(3)} = 3 c_1 u^2 u_t + 2 c_2 u_x u_{tx} + 2 c_3 u_y u_{ty} 
+ c_4 ( u_y u_{tx} + u_x u_{ty} ).
\end{equation}
Using (\ref{ZKequation}), replace $u_t$ with 
$-(\alpha u u_x + \beta u_{3x} + \beta u_{x2y}).$
Set $E = -\D_t \rho^{(3)},$ to get
\begin{eqnarray}
\label{ZKreplDtrhorank6}
E &=& 3 c_1 u^2 ( \alpha u u_x + \beta u_{3x} + \beta u_{x2y} )
+ 2 c_2 u_x ( \alpha u u_x  + \beta u_{3x} + \beta u_{x2y} )_x 
\nonumber \\
&& {} + 2 c_3 u_y ( \alpha u u_x  + \beta u_{3x} + \beta u_{x2y} )_y 
+ c_4 u_y (\alpha u u_x  + \beta u_{3x} + \beta u_{x2y} )_x 
\nonumber \\
&& {} + c_4 u_x (\alpha u u_x  + \beta u_{3x} + \beta u_{x2y} )_y.
\end{eqnarray}
Since (\ref{ZKreplDtrhorank6}) must be a divergence, 
use (\ref{zeroeulerscalarux2D}) and require
\begin{eqnarray}
\label{zeroEulerDtrhorank6} 
\mc{L}_{u(x,y)} E 
&=& -2 \big( (3 c_1  \beta  + c_3 \alpha) u_x u_{2y}
+ 2 (3 c_1 \beta + c_3 \alpha) u_y u_{xy}
+ 3 (3 c_1 \beta + c_2 \alpha) u_x u_{2x} \big) 
\nonumber \\ 
&& {} + 2 c_4 \alpha u_x u_{xy} + c_4 \alpha u_y u_{2x} \equiv 0.
\end{eqnarray}
It follows from (\ref{zeroEulerDtrhorank6}) that
\begin{equation}
\label{ZKrank6undetsystem}
3 c_1 \beta + c_2 \alpha = 0, \qquad 3 c_1 \beta + c_3 \alpha = 0,
\qquad \alpha c_4  = 0.
\end{equation}
Before solving the system it is necessary to check for potential compatibility
conditions on the parameters $\alpha$ and $\beta.$
This is done by setting each $c_i = 1$, one at a time, and algebraically
eliminating the other undetermined coefficients.
See, e.g., \cite{Poole09,GoktasHeremanJSC97} for details about computing 
compatibility conditions.
It turns out that there are no compatibility conditions for 
(\ref{ZKrank6undetsystem}) and the solution is
\begin{equation}
\label{ZKrank6undetsoln}
c_2 = - \tfrac{3 \beta}{\alpha} c_1, \qquad 
c_3 = -\tfrac{3 \beta}{\alpha} c_1, \qquad 
c_4 = 0,
\end{equation}
where $c_1$ is arbitrary.  
Substitute (\ref{ZKrank6undetsoln}) into (\ref{ZKrank6candidate}) and set 
$c_1 = 1$ to get
\begin{equation}
\label{ZKrank6density}
\rho^{(3)} = u^3 - \tfrac{3\beta}{\alpha} (u_x^2 + u_y^2).
\end{equation}
\vspace{-5mm}
\noindent
\subsection{Computation of the Flux}
After substitution of (\ref{ZKrank6undetsoln}) and $c_1 = 1$ into 
(\ref{ZKreplDtrhorank6}),
\begin{eqnarray*}
\label{Ewithsolns}
E &=& 3 u^2 (\alpha u u_x  + \beta u_{3x} + \beta u_{x2y} )
- \tfrac{6 \beta}{\alpha} u_x 
(\alpha u u_x  + \beta u_{3x} + \beta u_{x2y})_x 
\nonumber \\
&& {} - \tfrac{6 \beta}{\alpha} u_y 
(\alpha u u_x  + \beta u_{3x} + \beta u_{x2y} )_y.
\nonumber
\end{eqnarray*}
Since $E = \mathrm{Div}\,\mf{J}^{(3)},$ the flux $\mf{J}^{(3)}$ can be 
computed with the 2\,D homotopy operator which inverts divergences.
Using (\ref{2Dhomotopy}), the integrands are
\begin{eqnarray*}
I_{u(x,y)}^{(x)} E 
&=& 3 \alpha u^4 + \beta \left( 9 u^2 (u_{2x} + \tfrac{2}{3} u_{2y})
- 6  u (3 u_x^2 + u_y^2) \right) 
+ \tfrac{\beta^2}{\alpha} \left( 6 u_{2x}^2 + 5 u_{xy}^2
+ \tfrac{3}{2} u_{2y}^2 \right. 
\nonumber \\ 
&& \left. {} + \tfrac{3}{2} u ( u_{4y} + u_{2x2y})
- u_x (12 u_{3x} + 7 u_{x2y}) - u_y (3 u_{3y} + 8 u_{2xy})
+ \tfrac{5}{2} u_{2x} u_{2y} \right), 
\nonumber 
\end{eqnarray*}
and
\begin{eqnarray*}
I_{u(x,y)}^{(y)} E &=& 3 \beta  u ( u u_{xy} - 4 u_x u_y)
- \tfrac{\beta^2}{2 \alpha} \left( 3 u ( u_{3xy} + u_{x3y})
+ u_x (13 u_{2xy} + 3 u_{3y}) \right. 
\nonumber \\ 
&& \left. {} + 5 u_y (u_{3x} + 3 u_{x2y}) - 9 u_{xy} ( u_{2x} + u_{2y}) 
\right).
\nonumber
\end{eqnarray*}
The 2\,D homotopy operator formulas in (\ref{2Dhomotopy}) yield
\begin{eqnarray*}
\mc{H}_{\mf{u}(x,y)}^{(x)} E
&=& \int_0^1 \left( \mc{I}_{u(x,y)}^{(x)} E \right)
[\lambda \mf{u}]\,\frac{d \lambda}{\lambda} 
\nonumber \\ 
&=& \int_0^1 \left( 3 \alpha \lambda^3 u^4 
+ \beta \lambda^2 \left( 9 u^2 (u_{2x} + \tfrac{2}{3} u_{2y})
- 6  u (3 u_x^2 + u_y^2) \right) \right. 
\nonumber \\ 
&& {} + \tfrac{\beta^2}{\alpha} \lambda \left( 6 u_{2x}^2 + 5 u_{xy}^2
+ \tfrac{3}{2} u_{2y}^2 + \tfrac{3}{2} u (u_{2x2y} + u_{4y})
- u_x (12 u_{3x} + 7 u_{x2y}) \right. 
\nonumber \\ 
&& \left. \left. {} - u_y (3 u_{3y} + 8 u_{2xy})
+ \tfrac{5}{2} u_{2x} u_{2y} \right) \right)\,d \lambda 
\nonumber \\ 
&=& \tfrac{3}{4} \alpha u^4 
+ \beta \left( 3 u^2 (u_{2x} + \tfrac{2}{3} u_{2y})
- 2  u (3 u_x^2 + u_y^2) \right) 
\nonumber \\ 
&& {} + \tfrac{\beta^2}{\alpha} \left( 3 u_{2x}^2 + \tfrac{5}{2} u_{xy}^2
+ \tfrac{3}{4} u_{2y}^2 + \tfrac{3}{4} u (u_{2x2y} + u_{4y})
- u_x (6 u_{3x} + \tfrac{7}{2} u_{x2y}) \right. 
\nonumber \\ 
&& \left. {}  - u_y (\tfrac{3}{2} u_{3y} + 4 u_{2xy})
+ \tfrac{5}{4} u_{2x} u_{2y} \right),
\nonumber \\
\mc{H}_{\mf{u}(x,y)}^{(y)} E
&=& \int_0^1 \left( \mc{I}_{u(x,y)}^{(x)} E \right)
[\lambda \mf{u}]\,\frac{d \lambda}{\lambda} 
\nonumber \\ 
&=& \int_0^1 \left( 3 \lambda^2 \beta  u ( u u_{xy} - 4 u_x u_y)
- \tfrac{\beta^2}{2 \alpha} \lambda \left( 3 u ( u_{3xy} + u_{x3y})
+ u_x (3 u_{3y} + 13 u_{2xy}) \right. \right. 
\nonumber \\ 
&& \left. \left. {} + u_y (5 u_{3x} + 15 u_{x2y})
- \tfrac{9}{2} u_{2y} ( u_{2x} + u_{xy}) \right) \right)\,d \lambda 
\nonumber \\ 
&=&  \beta  u ( u u_{xy} - 4 u_x u_y)
- \tfrac{\beta^2}{4 \alpha} \left( 3 u ( u_{3xy} + u_{x3y})
+ u_x (13 u_{2xy} + 3 u_{3y}) \right. 
\nonumber \\ 
&& \left. {} + 5 u_y (u_{3x} + 3 u_{x2y}) 
- 9 u_{xy} ( u_{2x} + u_{2y}) \right).
\nonumber
\end{eqnarray*}
The flux $\mathbf{J} = \left( \mc{H}_{\mathbf{u}(x,y)}^{(x)} E,
\mc{H}_{\mathbf{u}(x,y)}^{(y)} E \right)$ has a curl term, 
$\mf{K} = (\D_y \theta, -\D_x \theta),$ with
\[
\theta 
= 2 \beta u^2 u_y + \tfrac{\beta^2}{4 \alpha} 
\Big(3 u (u_{2xy} + u_{3y}) 
+ 5 (2 u_x u_{xy} + 3 u_y u_{2y} + u_{2x} u_y) \Big).
\]
%
After removing $\mf{K}$ with the technique in Section \ref{secremovecurl},
the density-flux pair reads
\begin{eqnarray}
\label{ZKrank6conlaw}
\rho^{(3)} &=& u^3 - \tfrac{3\beta}{\alpha} (u_x^2 + u_y^2),
\nonumber \\ 
\mf{J}^{(3)} &=& \big( \tfrac{3 \alpha}{4} u^4 + 3 \beta u^2 u_{2x}
- 6 \beta u (u_x^2 + u_y^2)
+ \tfrac{3 \beta^2}{\alpha} (u_{2x}^2 - u_{2y}^2)
- \tfrac{6 \beta^2}{\alpha} (u_x (u_{3x} + u_{x 2y}) 
\nonumber \\ 
&& {} + u_y (u_{2x y} + u_{3y})),\,\, 3 \beta u^2 u_{xy}
+ \tfrac{6 \beta^2}{\alpha} u_{xy} (u_{2x} + u_{2y}) \big).
\end{eqnarray}
There is one additional density-flux pair \cite{Poole09} with explicit 
dependence on $x$ and $t,$ 
\begin{eqnarray}
\label{ZKrank1conlaw}
\rho^{(4)} &=& t u^2 - \tfrac{2}{\alpha} x u,
\nonumber \\ 
\mf{J}^{(4)} &=& \big( t (\tfrac{2 \alpha}{3} u^3 - \beta (u_x^2 - u_y^2)
+ 2 \beta u (u_{2x} + u_{2y}) ) - \tfrac{2}{\alpha} x ( \tfrac{\alpha}{2} u^2
+ \beta u_{2x} ) + \tfrac{2 \beta}{\alpha} u_x,  
\nonumber \\ 
&& {} \,\, - 2 \beta ( t u_x u_y + \tfrac{1}{\alpha} x u_{xy} ) \big),
\end{eqnarray}
which expresses the conservation of center of mass. 
No other density-flux pairs than the four reported above could be found 
\cite{Poole09} with {\tt ConservationLawsMD.m}.  
Zakharov and Kuznetsov \cite{ZakharovKuznetsov74} found three ``integrals of 
motion" which are identical to (\ref{ZKrank2conlaw}), 
(\ref{ZKrank4conlaw}), and (\ref{ZKrank1conlaw}). 
Without mention of fluxes, Infeld \cite{Infeld85} reports the densities in 
(\ref{ZKrank4conlaw}) and (\ref{ZKrank6conlaw}) and states that another
constant of motion exists, but only for the ZK equation in 3\,D.
Shivamoggi {\it et al}.\ \cite{Shivamoggi93} claimed that there are only 
four conservation laws for the ZK equation, but gave different results.  
Shivamoggi computed conservation laws for the potential ZK equation, 
$v_t + \tfrac{1}{2} v_x^2 + v_{3x} + v_{x2y} = 0,$
derived from (\ref{ZKequation}) with $\alpha = \beta = 1$ by setting 
$u = v_x$ followed by an integration with respect to $x.$
Doing so, he produced four nonlocal conservation laws for (\ref{ZKequation}).
The densities for the potential ZK equation reported in \cite{Shivamoggi93} 
are correct but the fluxes for three of the four conservation laws are
incorrect due to typographical errors \cite{Shivamoggi09}.

In summary, the existence of only four conservation laws confirms that 
(\ref{ZKequation}) is not completely integrable.
\vspace{-3mm}
\noindent
\section*{Acknowledgements}
This material is based in part upon research supported by the National Science 
Foundation (NSF) under Grant No.\ CCF-0830783.
Any opinions, findings, and conclusions or recommendations expressed in this
material are those of the authors and do not necessarily reflect the 
views of NSF. 

WH is grateful for the hospitality and support of the Department of 
Mathematics and Statistics of the University of Canterbury 
(Christchurch, New Zealand) during his sabbatical visit in Fall 2007.
M. Hickman (University of Canterbury) and B. Deconinck (University of
Washington) are thanked for sharing their insights about the scope and 
limitations of homotopy operators.
\vspace{-3mm}
\noindent

\end{document}